\newcommand{\floor}[1]{\left\lfloor #1 \right\rfloor}
\def\eps{\varepsilon}
\def\N{\bbN}
\newcommand{\epsft}[1]{{\widehat{#1}^\eps}}
\def\eps{\varepsilon}
\newcommand{\Or}{{\mathcal O}}
\newcommand{\R}{{\mathbb R}}
\newcommand{\sx}{\boldsymbol\sigma_{\rm x}}
\newcommand{\sy}{\boldsymbol\sigma_{\rm y}}
\newcommand{\sz}{\boldsymbol\sigma_{\rm z}}
\newcommand{\bsone} {\boldsymbol 1}
\newcommand{\sgn}{\operatorname{sgn}}
\def\Re{{\operatorname{Re\,}}}
\def\Im{{\operatorname{Im\,}}}
\def\be{\begin{equation}}
\def\ee{\end{equation}}
\def\ba{\begin{align}}
\def\bm{\begin{multline}}
\def\bfig{\begin{figure}[htb]}
\def\efig{\end{figure}}
\numberwithin{equation}{section}
\newtheorem{theorem}{Theorem}[section]
\newtheorem{lemma}[theorem]{Lemma}
\DeclareMathSymbol{\leqslant}{\mathalpha}{AMSa}{"36}
\DeclareMathSymbol{\geqslant}{\mathalpha}{AMSa}{"3E}
\DeclareMathSymbol{\doteqdot}{\mathalpha}{AMSa}{"2B}
\DeclareMathSymbol{\circlearrowright}{\mathalpha}{AMSa}{"08}
\DeclareMathSymbol{\subsetneq}{\mathalpha}{AMSb}{"28}
\DeclareMathSymbol{\supsetneq}{\mathalpha}{AMSb}{"29}
\renewcommand{\leq}{\;\leqslant\;}
\renewcommand{\geq}{\;\geqslant\;}
\newcommand{\dd}{{\rm d}}
\newcommand{\e}[1]{\,{\rm e}^{#1}\,}
\newcommand{\ii}{{\rm i}}
\def\Re{{\operatorname{Re\,}}}
\def\Im{{\operatorname{Im\,}}}
\newcommand{\upchi}{\raise 2pt \hbox{$\chi$}}
\newcommand{\smallupchi}{\raise 1pt \hbox{\tiny $\chi$}}
\def\writefig#1 #2 #3 {\rlap{\kern #1 truecm \raise #2 truecm
\hbox{#3}}}
\newcommand{\bbC}{{\mathbb C}}\newcommand{\bbN}{{\mathbb N}}\newcommand{\bbR}{{\mathbb R}}
\newcommand{\bsI}{{\boldsymbol I}}
\begin{document}

\title[Born-Oppenheimer transitions]{Non-adiabatic transitions through tilted avoided crossings}

\author{Volker Betz and  Benjamin D. Goddard}
\address{Volker Betz \hfill\newline
\indent Department of Mathematics \hfill\newline
\indent University of Warwick \hfill\newline
\indent Coventry, CV4 7AL, England \hfill\newline
{\small\rm\indent http://www.maths.warwick.ac.uk/$\sim$betz/}
}
\email{v.m.betz@warwick.ac.uk}

\address{Benjamin Goddard \hfill\newline
\indent Department of Mathematics \hfill\newline
\indent University of Warwick \hfill\newline
\indent Coventry, CV4 7AL, England \hfill\newline
{\small\rm\indent http://www.warwick.ac.uk/staff/B.D.Goddard/}
}
\email{b.d.goddard@warwick.ac.uk}

\maketitle

\begin{quote}
{\small
{\bf Abstract.}
We investigate the transition of a quantum wave-packet through a one-dimensional avoided crossing
of molecular energy levels when the energy levels at the crossing point are tilted. 
Using superadiabatic representations, and an approximation of the dynamics near the crossing region,
we obtain an explicit formula for the transition wavefunction. Our results agree extremely well with high 
precision ab-initio calculations. 
}  % end \small

\vspace{1mm}
\noindent
{\footnotesize {\it Keywords:} Non-adiabatic transitions, superadiabatic representations, asymptotic analysis, quantum dynamics, avoided crossings}

\vspace{1mm}
\noindent
{\footnotesize {\it 2000 Math.\ Subj.\ Class.:} 81V55, 34E20}
\end{quote}

\section{Introduction}

The photo-dissociation of diatomic molecules is one of the paradigmatic chemical reactions of 
quantum chemistry. The basic mechanism is that a short laser pulse lifts the electronic 
configuration of the molecule into an excited energy state. The nuclei then feel a force due to the 
changed configuration of the electrons, and start to move according to the classical Born-Oppenheimer dynamics.
Then, at some point in configuration space, the Born-Oppenheimer surfaces of the electronic ground state and first 
excited state come close to each other, leading to a partial breakdown of the Born-Oppenheimer approximation. 
As a result, with a certain small probability the electrons fall back into the ground state, facilitating the 
dissociation of the molecule into its atoms. This important mechanism is at the heart of many processes in 
nature, such as the photo-dissociation of ozone, or the reception of light in the retina \cite{Scho94}. 
For further details on the general mechanism we refer to \cite{RRZ89}.

The mathematical problem associated with photo-dissociation are non-adiabatic transitions at avoided crossings in a 
two-level system, with one effective spatial degree of freedom.  Thus, we study the system of partial differential equations
\be \label{SE0}
\ii \eps \partial_t \psi = H \psi,
\ee
with $\psi \in L^2(\bbR, \bbC^2)$, and
\[
H = - \tfrac{\eps^2}{2} \partial_x^2 \bsI + V(x).
\]
Above,  $\bsI$ is the $2 \times 2$ unit matrix, and, with $\sigma_x$ and $\sigma_z$ the Pauli matrices as defined in (\ref{Pauli}),
\[
V(x) = X(x) \sigma_{\rm x} + Z(x) \sigma_{\rm z} + d(x) \bsI = \left( \begin{matrix} Z(x) & X(x) \\ X(x) & -Z(x) \end{matrix} \right) + d(x) \bsI
\]
is the real-symmetric potential energy matrix in the diabatic representation. Units are such that
$\hbar = 1$ and the electron mass $m_{\rm el}=1$. $\eps^2$ is the ratio of electron and
reduced nuclear mass, typically of the order $10^{-4}$. 
The timescale is such that the nuclei (with position coordinate $x$) 
move a distance of order one within a time of order one.
The motivation of 
\eqref{SE0} and its relevance for photodissociation is discussed further in \cite{BGT}. 

There is a natural coordinate transformation of \eqref{SE0} that exploits the scale separation provided by the 
small parameter $\eps$. The corresponding representation is called the adiabatic representation, and is given as follows:
Let $U_0(x)$ diagonalize $V(x)$ for each $x$, and define
$\psi_0(x) = (U_0 \psi) (x) \equiv \psi(U_0(x))$. Then $\psi_0$ solves
\be \label{SE}
\ii \eps \partial_t \psi_0 = H_0 \psi_0,
\ee
with $H_0$ given to leading order by
\be \label{adiab hamil}
H_0 = - \frac{\eps^2}{2} \partial_x^2 \bsI + \left( \begin{matrix} \rho(x) + d(x) &
- \eps \kappa_1(x) (\eps \partial_x) \\ \eps
\kappa_1(x) (\eps \partial_x) & -\rho(x)+ d(x) \end{matrix} \right).
\ee
Here, $\rho = \sqrt{X^2+Z^2}$ is half the energy level separation, and
\[
	\kappa_1 = \frac{Z'X-X'Z}{Z^2+X^2}
\] 
is the adiabatic coupling element. 
A consequence of the choice of time scale in \eqref{SE} is that solutions will 
oscillate with frequency $1/\eps$. Thus the operator $\eps \partial_x$ is actually of order one.
However, we have still achieved a decoupling of the two energy levels in \eqref{adiab hamil}, up to errors of order 
$\eps$, as long as $X^2(x) + Z^2(x) > 0$. Generically, this inequality is always true: 
assuming that the entries of $V$ are analytic in the nuclear coordinate $x$, then eigenvalues of $V$ do not cross \cite
{vNW29}, and so their difference $2 \sqrt{X^2+Z^2}$ remains positive. 
An avoided crossing is a (local or global) minimum of $\rho(x)$, which results in nonadiabatic transitions between 
the adiabatic energy levels. 

The problem of photodissociation, or more generally of non-radiative decay, can now be formulated mathematically: 
assume that \eqref{SE} is solved with an initial
wave packet $\psi_{\rm in} \in L^2(\bbR,\bbC^2)$ that is fully in the upper adiabatic level (i.e. the second
component of $\psi_{\rm in}$ is zero). This is the situation just after the laser pulse brings the electrons to their
excited state. Assuming that the initial momentum is such that the wave packet travels past an avoided crossing, we 
wish to describe the second component of $\psi_0(x,t)$, to leading order, long after the avoided crossing has been passed.
By doing this, we predict not only the probability of a molecule dissociating, but also the quantum mechanical properties 
(momentum and position distribution) of the resulting wave packet. 

The difficulty in solving the above problem is that the resulting wave packets are typically very small, namely
exponentially small in $\eps$. As an example, let us assume that the initial wave packet $\psi_{\rm in}$ has 
$L^2$-norm of order one, and that the parameters are such that the $L^2$-norm of the transmitted wave function 
is expected to be of order $10^{-6}$, which we will later see is a fairly typical value. This means that 
any straightforward numerical method with an overall error of more than $10^{-6}$ will produce meaningless results, and thus if 
we were to apply a standard method (like Strang splitting) on the full equation \eqref{SE}, we would have to use
ridiculously small time steps. To make things worse, the solution is highly oscillatory. Thus, even though 
\eqref{SE} is a system of just $1+1$ dimensional PDE's, it is not at all trivial to solve numerically. Efficient 
numerical methods to solve \eqref{SE} will therefore require insight into the analytical structure of the equation. 
 
In \cite{BG}, we used superadiabatic representations in order to obtain such insight. We derive a closed-form approximation 
to the transmitted wavefunction at the transition point, which is highly 
accurate for general potential surfaces and initial wavepackets whenever $d(x)$, the trace of the potential, is small,
but deteriorates when $d(x)$ is moderate or large at the transition point. In general, it can not be taken for granted 
in real world problems that $d(x)$ is small. Therefore, in this paper we treat a potential with an arbitrary trace. 
Our result is weaker than the one in \cite{BG}. While in the latter paper, we could allow arbitrary incoming wave functions 
as long as they were semiclassical, we essentially require the incoming wave packet to be either Gaussian or a generalized
Hagedorn wave packet in the present work. However, in that case we still obtain a closed form expression 
for the transmitted wave function at the transition point, and the accuracy is as good as in \cite{BG}. 

%The additional difficulties are greater than
%%could be naively expected, and so we end up both with a weaker result and a more shaky justification. Let us describe our 
%results and compare them to those where no trace is present. 

The importance on nonadiabatic transitions has resulted in much effort to understand them.  A simplification of the problem 
is to replace the nuclear degree of freedom by a classical trajectory.  This approach is both long-established \cite{Zen32} 
and well-understood \cite{BeLi93, HaJo04, BT05-2} and leads to the well-known Landau-Zener formula for the transition probability 
between the electronic levels.  This formula underpins a range of surface hopping models \cite{Vor98,Las07,Las08}.  
Although these and other trajectory-based methods \cite{Tul90} yield reasonably accurate transition probabilities, they are 
unable to accurately predict the shape of the transmitted wave packet  \cite{Las08}.  An improvement to the Landau-Zener 
rates is achieved by Zhu-Nakamura theory \cite{Nak02}, which is based on the full quantum scattering theory of the problem. 
However, once again only the transition probabilities are treated, and not the wave packet itself.

It is worth noting that, due to the complexity of the full quantum-mechanical problem of transitions at avoided crossings, 
there are few existing mathematical approaches.  The most relevant approach to this work is that of \cite{HaJo05} where 
another formula is given (and proved) for the asymptotic shape of a non-adiabatic wavefunction in the scattering regime at 
an avoided crossing.  For simplicity we do not state it here, see Theorem 5.1 of \cite{HaJo05}.  For comparison, their 
result looks very different to ours, and it is expected that they will not agree in the limit of small $\eps$ as theirs is 
asymptotically correct, whereas we have aimed for a simple formula which works well for a wide range of physically relevant 
parameters. Nevertheless, our approach is much better suited 
for practical purposes than the formula of \cite{HaJo05}, which requires one to calculate complex contour integrals of the 
analytic continuation of some function $V$ that is defined only implicitly. 

\section{Computing the non-adiabatic transitions} \label{S:algorithm}
In this section we will give a concise overview of our method for computing non-adiabatic transition wavefunctions, 
and explain the various parameters entering the final formula. The justification of our method, some extensions and a 
numerical test  will be given in the remainder of the paper. 

The data of our problem consists of two parts, the potential energy matrix $V$ and the initial wave function. 
More precisely, we assume that we are given $\rho(x)$ and $d(x)$ as in \eqref{adiab hamil}, and that $\rho$ has a 
unique global minimum in the region of space that we are interested in. We choose the coordinate system such that
this minimum occurs at $x=0$, and we thus have 
\[
\rho(x) = \delta + \Or(x^2), \qquad d(x) = d_0 + \lambda x + \Or(x^2).
\]
The transmitted wavefunction only depends on on $\lambda$ and $\rho$, but unfortunately the latter quantity does 
not enter in a simple way. Under the reasonable assumption that the matrix elements $X$ and $Z$ are analytic functions of $x$ 
at least close to the real axis, then so is $\rho^2$. We 
write $\rho(q)^2=\delta^2 + g(q)^2$ where $g$ is analytic and $g(0)=0$. 
Since $g^2$ is quadratic at 0, a Stokes line (a curve with $\rm{Im}(\rho)=0$) crosses the real axis perpendicularly, and, for small $\delta$, 
extends into the complex plane to two complex zeros of $\rho$, namely $q_\delta$ and $q_\delta^*$.  We define, for any complex $z$, the 
`natural scale' \cite{BeLi93} 
\[
\tau(z) = 2 \int_0^z \rho(\xi) \, \dd \xi,
\]
and write $\tau_\delta = \tau(q_\delta)$, where $q_\delta$ by convention is the complex zero with positive imaginary part. We write 
\[
\tau_{\rm r} = \Re(\tau_\delta), \qquad \tau_{\rm c} = \Im(\tau_\delta),
\]
which are the two parameters that enter into the transition formula. When we are given $\rho$ in a functional form, neither the computation
of its complex zeroes not of the complex line integral leading to $\tau_{\delta}$ is a problem numerically, and can be carried out to 
any required accuracy. However, in the case of radiationless transitions, the potential energy surfaces are often known only approximately. 
As our final formula will depend very sensitively on the value of $\tau_\delta$, small errors in this quantity will lead to wrong predictions. 
This is not a fault of our method, but a general obstruction to any numerical method aiming to calculate small nonadiabatic transitions: namely, 
since our formula below agrees very accurately with ab-initio computations, and depends very sensitively on $\tau_\delta$, getting $\rho$ wrong 
will lead to wrong results regardless of the method used. 
In a way, it should not be too surprising that when looking for a very small effect, we need to get the data right 
with very high accuracy. But it does pose a serious practical challenge when trying to predict small nonadiabatic transitions. 

As for the initial wavefunction, first of all we assume it to be initially concentrated in the upper electronic energy band. This means that 
we will consider equation \eqref{SE} with initial condition $\psi_0(x,0) = (\phi_+(x,0),0)^T$. The restriction of this work, when compared to the 
case with $\lambda=0$ considered in \cite{BG}, is on the form of $\phi_+(x,0)$, which we require to be either Gaussian, or a finite 
linear combination of Gaussians, or a Hagedorn wavefunction. For the present exposition, we restrict to the case where it is Gaussian. 
The first step of our algorithm is straighforward:

{\bf Step 1:} 
Solve the upper band adiabatic equation 
$\ii \eps \partial_t \psi_+ = H^+ \psi_+$, $\psi_+(0) = \phi_+(\cdot,0)$,
where $H^+ = -\eps^2 \partial_x^2 / 2 + \rho(x) + d(x)$. 
This can be done either by direct Strang splitting, or using the theory of Hagedorn wave packets \cite{Lubich08}. 
For a transition to occur, we need the wave packet to cross the transition region near $x=0$, where $\rho$ is minimal. So we monitor the 
expected position $\langle X \rangle = \int x |\psi_+(x)|^2 \, \dd x$ and stop the evolution when $\langle X \rangle = 0$, say at time 
$t_0$. Let us write $\phi(x) = \psi_+(x,t_0)$. $\phi$ is Gaussian up to errors of order $\eps$ \cite{Lubich08}, and centered at $x=0$. 
Thus we have 
\be \label{gauss packet}
\epsft{\phi}(k) = \exp \left( - \frac{c}{\eps} (k-p_0)^2 \right)
\ee
with parameters $p_0$ (the mean momentum) and $c$. Above, we used the semiclassical Fourier transform, cf. \eqref{epsft}. 

Given these initial data, we need to define one further derived quantity. Put $n_0 = \frac{\tau_{\rm c}}{\eps k_0}$, where 
$k_0$ is part of the solution of the pair of equations  
\be \label{n choice}
k=\sqrt{\eta^2+4\delta}, \qquad \eta=k\big(1-\tfrac{4 c \delta(\eta-p_0)}{\tau_c}\big).
\ee
Again, the numerical value of $n_0$ is easy to obtain. In what follows, we will use the abbreviation
\[
\eta^\ast = \eta^\ast(k) = \sqrt{k^2 - 4 \delta}.
\]

{\bf Step 2} Put 
\be
\label{numericsFormula}
\begin{split} 
	\epsft{\phi_-}(k)  \approx  & \frac{1}{2\sqrt{4 \alpha_{2,0}\alpha_{0,2} - \alpha_{1,1}^2}} \exp\Big[ \frac{ \alpha_{2,0}\alpha_{0,1}^2 + \alpha_{0,2}\alpha_{1,0}^2 - \alpha_{1,0}\alpha_{0,1}\alpha_{1,1} } {\alpha_{1,1}^2 - 4\alpha_{2,0}\alpha_{0,2}}\Big] \\
	& \times (\eta^*+k)  \e{-\tfrac{\tau_c}{2\delta\epsilon}|k-\eta^*|} \e{-\ii \tfrac{\tau_r}{2\delta\epsilon}(k-\eta^*)} \e{-\ii \varphi(p_0)} \epsft{\phi}(\eta^*) \chi_{k^2 > 4\delta},
\end{split}
\ee
with 
\begin{align}
	& \alpha_{1,0} = \frac{\sgn(k) \tau_{\rm c} + \ii \tau_{\rm r} - \eta^\ast n_0 \eps - 4 c \delta (\eta^\ast - p_0)}
	{2 \delta \sqrt{\eps}} + \frac{\sqrt{\eps}}{k + \eta^\ast}, \notag\\	
	& \alpha_{0,1} = -\frac{2(n_0+1) \eps^{1/2} \lambda}{k+\eta^*}, \quad \alpha_{1,1} = -\ii \eta^*+ \frac{2(n_0+1)\lambda\eps}{(k+\eta^*)^2}, \label{alphas}\\ 
	& \alpha_{2.0} = -\frac{2 \delta n_0 \eps + {\eta^\ast}^2}{8 \delta^2} - c - \frac{\eps}{2(k + \eta^\ast)^2}
%	\alpha_{2,0} =  -\tfrac{n_0\eps}{4\delta} - \tfrac{n_0 \eta^{*2}\eps}{8\delta^2} - \tfrac{\eps}{2(k+\eta^*)^2} -c
	, \quad	\alpha_{0,2} = -\ii \frac{ 2\delta \lambda}{(k+\eta^*)} - \frac{2(n_0+1)\lambda^2 \eps}{(k+\eta^*)^2},\notag 
\end{align}
and 
\[
	\varphi(p_0)=-\frac{(n_0+1)^2\eps\lambda a_0 \delta}{2(n_0+1)^2\lambda^2 \eps^2 + 2\delta^2 a_0^2}
	- \frac{1}{2}\arctan \Big(\frac{a_0 \delta}{(n_0+1)\eps\lambda}\Big) + \sgn(\lambda p_0) \frac{\pi}{4},
\]
where above $a_0 = \sqrt{p_0^2 + 4 \delta} + p_0$. While formula \eqref{numericsFormula} is trival to implement on a 
computer and produces accurate results, it would of course be desirable to interpret the various terms in a physically
meaningful way. However, we have been unable to do this. 
On the other hand, everything except the factor 
$\e{-\ii \varphi(p_0)}$ is obtained by approximations and exact Gaussian computations. 
The latter factor is more tricky. 
We include it because we found a discrepancy in the phase of the transmitted wave function 
between our formula without the factor and numerical 
ab-initio calculations, which is probably due to one of our approximations below being too crude. 
This phase discrepancy is removed by essentially computing the phase in the case 
of an incoming Gaussian when the parameter $c$ diverges, i.e.\ infinitely small momentum uncertainty, 
which gives $\varphi(p_0)$, and subtracting that. While this fixes the discrepancy with the numerics, 
we do not, as yet, fully understand why it does so, and where the original inaccurate approximation has been made. 

One important property of $\varphi(p_0)$ is that it is constant in both $k$ and $x$ and hence will not affect any quantum mechanical expectation values. It would however play a role when we consider interferences. 

The final step of our algorithm is again straightforward:\\
{\bf Step 3} Solve the lower band adiabatic equation with initial condition $\phi_-$, i.e. solve 
\[
\ii \eps \partial_t \psi_- = H^- \psi_-, \qquad \psi_-(t_0) = \phi(\cdot,t_0),
\]
where $H^- = -  \eps^2 \partial_x^2 / 2 -\rho(x) + d(x)$, and $\phi_-$ is the inverse semiclassical Fourier transform of 
$\epsft \phi_-$. For times so large that $\psi_-$ has support far away from the transition region, it describes the 
transmitted wave function of equation \eqref{SE} with great accuracy, see section \ref{S:Numerics}.

\section{Evolution in the Superadiabatic Representations}

\subsection{Superadiabatic Representations}

The key idea for deriving our transition formulae is to study the evolution in a suitable superadiabatic representation. 
For a careful discussion of the theory of those representation, we refer to \cite{BGT}. Here we give only some intuition and the 
mathematical facts. The $n$-th superadiabatic representation is implemented by a unitary operator $U_n$ 
acting on $L^2(\bbR,\bbC^2)$, and its main property is that it diagonalizes the right hand side of \eqref{SE0} up to errors 
of order $\eps^{n+1}$. Thus, the adiabatic representation \eqref{adiab hamil} is the zeroth superadiabatic representation, and 
in general 
\[
	H_n = U_n^{-1} H U_{n}
	 = - \frac{\eps^2}{2} \partial_x^2 \bsI + \left( \begin{matrix} \rho(x) + d(x) &
	\eps^{n+1} K_{n+1}^+ \\ \eps^{n+1} K_{n+1}^- & -\rho(x)+ d(x) \end{matrix} \right),
\]
where $K_n^\pm$ are the $n$-th superadiabatic coupling elements. They are usually pseudo-differential operators, and so are the 
$U_n$. The useful consequence of switching to the superadiabatic representation is that now 
the evolution of the second component $\psi_n^-$ of $\psi_n=U_n \psi$, subject to $\psi_n^-(-\infty)=0$, 
is given by
\be
	\psi_{n}^-(t) = - \ii \eps^{n}   \int_{-\infty}^t  \e{-\frac{\ii}{\eps} (t-s) H^-} K_{n+1}^- \e{-
\frac{\ii}{\eps} s  H^+ } \phi \, \dd s, \label{psinEvolution}
\ee
up to relative errors of order $\eps$. Thus, provided we can control $K_n^-$, \eqref{psinEvolution} gives the transmitted wave function in 
the $n$-th superadiabatic representation to high precision. 

There are some apparent problems with this idea. Firstly, it is far from clear how we hope to control $K_n^-$. Secondly, 
the superadiabatic unitaries are in general very hard to calcualte, and as such this formulation does not allow the adiabatic
wavefunction to be easily obtained. Thirdly, we have to decide which value of $n$ we want to use. The sequence $K_n^-$ is 
expected to be asymptotic in $n$, so after initially decaying rapidly (in an appropriate sense) it will start to grow beyond all
limits when $n$ is taken to infinity. 
The second problem is resolved when we study the wavefunction in the scattering regime,
well away from the avoided crossing.  In this case, for potentials which are approximately constant, it is known that 
$U_n$ and $U_0$ agree up to small errors depending on the derivatives of the potential \cite{Teu03}, 
and (\ref{psinEvolution}) can be used to calculate the transmitted wavefunction.
For the value of $n$, in \cite{BGT} we showed for a special choice of parameters $\rho, \kappa$ that  there exists an `optimal' $n$ 
for which $\psi_n^-(t)$ builds up monotonically, corresponding to a single transition. This $n$ is given by the set of nonlinear 
equations \eqref{n choice} that we have seen in the previous section.
We expect this set of equations to hold in general, and have obtained very good results by using it here.

The problem of calculating $K_n^-$ turns out to be reducible to a set of differential recursions, which we will now give. 
The discussion follows the one in \cite{BGT} very closely, the only difference being that we now include a nonzero trace $d(x)$ in 
the Hamiltonian. All the calculations and arguments are almost the same as 
in \cite{BGT}, so we will omit them. 

We change from the spatial representation to the symbolic representation (see e.g.\ \cite{Teu03}) 
by replacing $x$ by $q \in \R$ and $\ii \eps \partial_x$ by an 
independent variable $p \in \R$, where the factor $\eps$ takes into account the semiclassical scaling.  We need to introduce some further 
notation: we rewrite the potential as
\[
	V(q)=\rho(q)\begin{pmatrix} \cos\big(\theta(q)\big) & \sin\big(\theta(q)\big) \\ \sin\big(\theta(q)\big) & \cos \big(\theta(q)\big) \end{pmatrix} + d(q) \begin{pmatrix} 1 & 0 \\ 0 & 1 \end{pmatrix},
\]
which defines $\theta(q)$.  It follows that the unitary transformation to the adiabatic representation is given by
\[
	U_0(q)=\begin{pmatrix} \cos \big( \frac{\theta(q)}{2} \big) & \sin\big( \frac{\theta(q)}{2} \big)
	 \\ \sin\big( \frac{\theta(q)}{2} \big) & -\cos\big( \frac{\theta(q)}{2} \big) \end{pmatrix}.
\]
Hence the Pauli matrices in the adiabatic representation are given by
\[
	\sx(q)=U_0(q) \sigma_{\rm x} U_0(q), \quad \sy(q)=U_0(q) \sigma_{\rm y} U_0(q),
	\quad \sz(q)=U_0(q) \sigma_{\rm z} U_0(q),
\]
where
\be
	\sigma_{\rm x} = \begin{pmatrix} 0 & 1 \\ 1 & 0 \end{pmatrix}, \quad 
	\sigma_{\rm y} = \begin{pmatrix} 0 & -\ii \\ \ii & 0 \end{pmatrix}, \quad 
	\sigma_{\rm z} = \begin{pmatrix} 1 & 0 \\ 0 & -1 \end{pmatrix},
	\label{Pauli}
\ee
and we have used that $U_0^*=U_0$. 

A direct calculation confirms, with  $\bsone$ the $2\times 2$ identity matrix:
\begin{lemma} \label{V derivatives}
We have
\[
\partial_q^n V(q) = a_n(q) \sz(q) + b_n(q) \sx(q) + c_n(q) \bsone,
\]
where $a_n(q)$, $b_n(q)$ and $c_n(q)$ are given by the recursions
\be \label{anbn}
\begin{array}{rclrcl}
a_0(q) &=& \rho(q), \quad & a_{n+1}(q) &=& a_n'(q) + \theta'(q) b_n(q) \\[2mm]
b_0(q) &=& 0, \qquad & b_{n+1}(q) &=& b_n'(q) - \theta'(q) a_n(q) \\[2mm]
c_0(q) &=& d(q) \qquad & c_{n+1}(q) &=& c_n'(q)
\end{array}
\ee
\end{lemma}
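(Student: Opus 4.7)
The plan is a straightforward induction on $n$. For the base case $n=0$, observe that the hypothesis $a_0 = \rho$, $b_0 = 0$, $c_0 = d$ just reproduces $V(q) = \rho(q)\sz(q) + d(q)\bsone$, which holds by the definitions: the diabatic potential reads $V = X\sigma_{\rm x} + Z\sigma_{\rm z} + d\bsone$ with $(Z,X) = \rho(\cos\theta,\sin\theta)$, so $V - d\bsone = \rho(\cos\theta\,\sigma_{\rm z} + \sin\theta\,\sigma_{\rm x}) = U_0\rho\sigma_{\rm z}U_0 = \rho\sz$, using $U_0\sigma_{\rm x}U_0\cdot\sin\theta + U_0\sigma_{\rm z}U_0\cdot\cos\theta$ simplified appropriately (or just because $V$ has eigenvalues $\pm\rho + d$ and is diagonalized by $U_0$).

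For the inductive step, the only non-trivial ingredient is the $q$-derivative of the adiabatic Pauli matrices. I would first compute these explicitly from $\sx(q) = U_0(q)\sigma_{\rm x}U_0(q)$ and $\sz(q) = U_0(q)\sigma_{\rm z}U_0(q)$, obtaining the closed forms
\[
\sz(q) = \begin{pmatrix}\cos\theta & \sin\theta \\ \sin\theta & -\cos\theta\end{pmatrix}, \qquad \sx(q) = \begin{pmatrix}\sin\theta & -\cos\theta \\ -\cos\theta & -\sin\theta\end{pmatrix},
\]
and then differentiating entrywise yields the clean relations
\[
\partial_q \sz(q) = -\theta'(q)\,\sx(q), \qquad \partial_q \sx(q) = \theta'(q)\,\sz(q),
\]
while $\partial_q \bsone = 0$. (One could also derive these more conceptually by differentiating $U_0^2 = \bsone$ to get $\partial_q U_0 = -U_0(\partial_q U_0)U_0$, and then observing that $\partial_q U_0 = \tfrac{\theta'}{2}\bigl(-\sin(\theta/2)\,\bsone + \cos(\theta/2)\,\sigma_{\rm x}\sigma_{\rm z}\bigr)$ in suitable form, but the direct matrix computation is faster.)

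Granted these two identities, assume inductively that $\partial_q^n V = a_n \sz + b_n \sx + c_n \bsone$. Differentiating once more and using the product rule gives
\[
\partial_q^{n+1} V = \bigl(a_n' + \theta' b_n\bigr)\sz + \bigl(b_n' - \theta' a_n\bigr)\sx + c_n'\,\bsone,
\]
so reading off the coefficients recovers precisely the recursions \eqref{anbn}. The induction closes and the lemma follows.

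There is no real obstacle: the entire content is the computation $\partial_q \sz = -\theta'\sx$ and $\partial_q \sx = \theta'\sz$, which is a one-line trigonometric calculation once $\sx$ and $\sz$ are written out. The only place one might slip is a sign in the matrix of $\sx(q)$ (which controls the sign in the $b_{n+1}$ recursion), so I would double-check that by re-expanding $U_0\sigma_{\rm x}U_0$ carefully.
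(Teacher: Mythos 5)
Your proof is correct and is exactly the ``direct calculation'' the paper alludes to (the paper gives no further detail): the whole content is $\partial_q \sz = -\theta'\sx$, $\partial_q \sx = \theta'\sz$, plus induction, and your signs check out. One minor remark: the paper's displayed rewriting of $V(q)$ has a typo ($+\cos\theta$ instead of $-\cos\theta$ in the $(2,2)$ entry of the trace-free part); you implicitly used the correct form $V - d\bsone = \rho\,\sz$, which is what makes $a_0=\rho$, $b_0=0$ come out right.
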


We then have the following explicit recursion for the coupling elements:
\begin{theorem} \label{h_n lowest order}
The Hamiltonian in the $n$-th superadiabatic representation is given by 
\[
	H_n(\eps,p,q) = \frac{p^2}{2} \bsone + \begin{pmatrix}
	\rho(q) + d(q)& \eps^{n+1} \kappa_{n+1}^{+} (p,q) \\ \eps^{n+1} \kappa_{n+1}^{-}(p,q) & -\rho(q) + d(q)
	\end{pmatrix} + \begin{pmatrix} \Or(\eps^2) & \Or(\eps^{n+2}) 
	\\ \Or(\eps^{n+2}) & \Or(\eps^2)
	\end{pmatrix},
\]
where
\[
	\kappa_{n+1}^{\pm}(p,q) = -2 \rho (q) (x_{n+1} (p,q) \pm y_{n+1} (p,q)).
\]

Setting  $x_n(p,q)=\sum_{m=0}^n p^{n-m} x_n^m(q)$ with similar expressions for $y_n$, $z_n$ and $w_n$, the coefficients $x_n^m$ to $w_n^m$ are determined by the 
following recursive algebraic-differential equations:
\be \label{rec start 2}
x_1^m = z_1^m = w_1^m = 0, \; m=0,1, \qquad y_1^0 = -\ii \frac{\theta'(q)}{4 \rho(q)}, \; y_1^1=0
\ee
with
\[
x_{n+1}^m = -\frac{1}{2 \rho} \left( \frac{1}{\ii} (y_n^m)' - 2 \sum_{j=1}^{\floor{m/2}} \frac{1}{(2\ii)^j}
\tbinom{n+1-m+j}{j} (b_j z_{n+1-j}^{m-2j} - a_j x_{n+1-j}^{m-2j} + c_j y_{n+1-j}^{m-2j}) \right)
\]
for $n$ odd, and
\[
\begin{split}
y_{n+1}^m = & -\frac{1}{2 \rho} \left( \frac{1}{\ii} \big( (x_n^m)' - \theta' z_n^m \big)  
%\right. \\ 
%&  \qquad \qquad 
% \left. 
 - 2 \sum_{j=1}^{\floor{m/2}} \frac{1}{(2\ii)^j}
\tbinom{n+1-m+j}{j} ( -a_j y_{n+1-j}^{m-2j} + b_j w_{n+1-j}^{m-2j} + c_j x_{n+1-j}^{m-2j}) \right),\\
0  = & \frac{1}{\ii} \big( (z_n^m)' + \theta' x_n^m \big) - 2 \sum_{j=1}^{\floor{m/2}} \frac{1}{(2\ii)^j}
\tbinom{n+1-m+j}{j} (b_j y_{n+1-j}^{m-2j} + a_j w_{n+1-j}^{m-2j} + c_j z_{n+1-j}^{m-2j}),\\
0  = & \frac{1}{\ii} (w_n^m)' - 2 \sum_{j=1}^{\floor{m/2}} \frac{1}{(2\ii)^j}
\tbinom{n+1-m+j}{j} (a_j z_{n+1-j}^{m-2j} + b_j x_{n+1-j}^{m-2j} + c_j w_{n+1-j}^{m-2j}),
\end{split}
\]
for $n$ even.  The coefficients $a_n$ to $c_n$ are given by Lemma \ref{V derivatives}
\end{theorem}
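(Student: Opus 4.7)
The plan is to follow the inductive construction of superadiabatic representations developed in \cite{BGT}, extended to track the new contributions arising from the trace $d(q)$. One proceeds order by order in $\eps$: assuming the Hamiltonian has already been reduced to a form whose off-diagonal part is $\Or(\eps^n)$ by a unitary $U_{n-1}$, one sets $U_n = U_{n-1} \star \exp_\star(\eps^n T_n)$, where the star exponential is taken with respect to the Moyal product and $T_n$ is a purely off-diagonal generator, i.e.\ a combination of $\sx$ and $\sy$. The Moyal commutator of $\eps^n T_n$ with the leading diagonal symbol $\rho(q) \sz + d(q) \bsone$ produces, via $[\sx,\sz] = -2\ii\sy$ and $[\sy,\sz] = 2\ii\sx$ together with the fact that $\bsone$ commutes with everything, a purely off-diagonal contribution at order $\eps^n$; choosing the two coefficients of $T_n$ so that this cancels the $\eps^n$ off-diagonal symbol already present determines $T_n$ uniquely, and the off-diagonal symbol at the next order defines $\kappa_{n+1}^\pm = -2\rho(x_{n+1} \pm y_{n+1})$.

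The key simplification is to expand every matrix-valued symbol in the Pauli basis as $w \bsone + x \sx + y \sy + z \sz$. Because $\bsone$ commutes with every Pauli matrix, the identity component $w$ cannot be affected by the off-diagonal cancellation and evolves purely through the higher-order differential part of the Moyal product; this explains why the recursion for $w_n^m$ in the theorem depends only on the $c_j$ inhomogeneity. The polynomial-in-$p$ ansatz $x_n(p,q) = \sum_m p^{n-m} x_n^m(q)$ is justified by the fact that $p$ enters the Moyal product only through $\partial_p^j$ acting on the kinetic symbol $p^2/2$, so each recursion level raises the $p$-degree by at most one. Expanding $A \star B$ as a formal series in $\eps$, using Lemma \ref{V derivatives} to resolve each $\partial_q^j V$ into its Pauli components $a_j \sz + b_j \sx + c_j \bsone$, and matching the coefficient of each Pauli matrix at each power of $p$ and $\eps$ produces the four recursive equations of the theorem. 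The binomial factor $\binom{n+1-m+j}{j}$ arises from applying $\partial_p^j$ to $p^{n+1-m+j}$ followed by the reindexing $n \mapsto n+1-j$, $m \mapsto m-2j$ that tracks which lower-order coefficient feeds into the current one. The new feature compared to \cite{BGT}, namely the trace $d(q)$, enters through the $c_j$, and by commutativity of $\bsone$ these always multiply the \emph{same} Pauli component as the one being computed, in contrast to $a_j$ and $b_j$ which produce Pauli swaps through the commutator algebra.

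The main obstacle will be the purely combinatorial bookkeeping: keeping track of signs, factors of $\ii$ and $(2\ii)^{-j}$, and the precise Pauli component into which each term falls after the Moyal expansion. This is tedious but mechanical. The base case of the induction is anchored by verifying that the initial data $y_1^0 = -\ii \theta'(q)/(4\rho(q))$ in \eqref{rec start 2} reproduces the standard adiabatic coupling via $\kappa_1^\pm = \mp 2\rho y_1^0 = \pm \ii \theta'/2$, which matches the off-diagonal entries of $H_0$ in \eqref{adiab hamil} after identifying $\theta'/2 = \kappa_1(x)$. Once the base case is verified, the induction step reproduces the corresponding computation of \cite{BGT} essentially verbatim, with the $c_j$ terms inserted at the positions dictated by the Pauli-commutation analysis above.
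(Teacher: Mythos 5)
Your proposal is correct and follows essentially the same route as the paper, which simply states that the proof is analogous to Theorems 3.4 and 3.5 of \cite{BGT} with easy alterations for the trace $d(x)$; your Pauli-basis Moyal-expansion argument, with the $c_j$ terms landing in the same Pauli component by commutativity of $\bsone$, is precisely the content of those proofs and of the stated alteration. The only quibble is in your base-case check: $y_1 = p\,y_1^0$ carries a factor of $p$, so $\kappa_1^\pm = \pm\ii p\,\theta'/2$, consistent with the $\eps\partial_x$ appearing in the off-diagonal entries of \eqref{adiab hamil}.
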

\begin{proof}
The proof is analogous to those of Theorem 3.4 and Proposition 3.5, with some easy alterations due to the presence of $d(x)$. 
\end{proof}

We note that, as in the trace-free case in \cite{BGT}, $y_n^m=0$ for all $m$ when $n$ is even and $x_n^m=z_n^m=w_n^m=0$ for all $m$ when $n$ is odd.
Furthermore, from the above equations, it is obvious that $x_n^m = y_n^m = z_n^m = w_n^m = 0$ for odd $m$.

We now have an explicit expression for $\kappa_n^-$ and may therefore also calculate $K_n^-$, the superadiabatic coupling element, which is the Weyl quantization of the symbol $\kappa_n^-$:
\[
	K_{n}^\pm \psi (x) = \frac{1}{2 \pi \eps} \int_{\R^2} \dd \xi\,  \dd y \, \kappa_{n}^{\pm}
	\left( \tfrac{x+y}{2}, \xi \right) \e{\frac{\ii}{\eps} \xi (x-y)} \psi(y),
\]
and from the recursions in Theorem \ref{h_n lowest order}, it is clear that
\[
	\kappa_n(p,q)=\sum_{j=0}^n p^j \kappa_{n,n-j}(q),
\]
where the $\kappa_{n,n-j}$ can be calculated explicitly. Determining the asymptotics of this two-parameter recursion is 
a very tricky problem to which we have no solution. However, in the regime of large $p$ (meaning, large incoming momentum)  the sum is well-
approximated by the $j=n$ term. For $p=\mathcal{O}(\eps^{-1/3})$, this can be made rigorous on the level of the superadiabatic 
Hamiltonian, while a full asymptotic investigation of the transitions in this regime is still work in progress \cite{BG2}.
Here, we use this approximation without further justification, and find that it gives good results even for relatively small
values of $p$.  

The asymptotics of the term $\kappa_{n,0}^-$ can be determined explicitly in the following generic case.  Without loss of generality, we assume 
that the avoided crossing occurs at $x=0$, specify the initial wave packet at $t=0$, and write $\rho(q)^2=\delta^2 + g(q)^2$ where $g$ is 
analytic and $g(0)=0$.    As is standard in asymptotic analysis (see e.g. \cite{BeLi93}), the asymptotic behaviour of $\kappa_{n,0}^-$ is 
determined by the complex zeros of $\rho$.  Since $g^2$ is quadratic at 0, a Stokes line (a curve with $\rm{Im}(\rho)=0$) crosses the real axis 
perpendicularly, and, for small $\delta$, extends into the complex plane to two complex zeros of $\rho$, namely $q_\delta$ and $q_\delta^*$.  As 
argued by Berry and Lim, in the natural scale $\tau(q)=2 \int_0^q \rho(q) \dd q$ and near $q=0$, the adiabatic coupling function has the form $
	\kappa_1(q) = \tfrac{\ii \rho(q)}{3}\big( \tfrac{1}{\tau(q)-\tau_\delta^*} - \tfrac{1}{\tau(q)-\tau_\delta} + \kappa_r ( \tau(q) )\big),$
with $\tau_\delta = \tau(q_\delta)$.  In particular, $\kappa_r$ has no singularities fro $|\tau| < |\tau_\delta|$, and 
no singularities of order $\geq1$ for $|\tau|\leq |\tau_\delta|$.  As can be seen from Theorem \ref{h_n lowest order}, 
solving the recursions for $\kappa_n^-$ requires taking high derivatives of $\kappa_1$.  By the Darboux principle, the 
asymptotics are dominated by the complex singularities closest to the real axis, $\tau_\delta$ and $\tau_\delta^*$.  Hence, 
to leading order, we find
\be
	\kappa_{n,0}^-(q)=\tfrac{\ii^n}{\pi} \rho(q) (n-1)! \big( \tfrac{\ii}{(\tau-\tau_\delta^*)^n} - \tfrac{\ii}{(\tau-\tau_\delta)^n} \big).
	\label{kappan0}
\ee

Using the definition of the Weyl quantisation, a direct calculation \cite{BGT} shows 
%To complete our discussion of the superadiabatic coupling elements, it remains to determine the values of $q$ for which the coupling function is 
%large.  A direct calculation \cite{BGT} shows that, in position space, the Weyl quantization of $p^n\kappa_{n,0}^-(q)$ is given by
\be
	K_{n,0}^-=\sum_{j=0}^n \binom{n}{j}\big( \tfrac{\eps}{2\ii} \big)^j \big(\partial_x^j \kappa_{n,0}^-(x) \big) (-\ii \eps \partial_x)^{n-j}.
	\label{Kn0minus}
\ee
%We now note that relevant values of $n$ are of the order $1/\eps$ (see Section \ref{S:n}), and hence, noting that for large $n$ we have $(1+x^2)^{-n} \approx \e{-n x^2}$, $\kappa_{n,0}^-$ and its derivatives are concentrated in a $\sqrt\eps$ neigbourhood of $x=0$.  It follows that the action of $K_{n,0}^-$ on a function without support near zero is negligible.  In particular, since the time scale is chosen such that the semiclassical wave packets (which have width of order $\sqrt\eps$) travel at speed of order one, the dominant transitions come from times of order $\sqrt\eps$.

\subsection{Approximation of the Adiabatic Propagators} \label{S:ApproxProp}
In order to determine a closed form approximation for  (\ref{psinEvolution}), it is necessary to approximate the adiabatic propagators. 
This is in contrast to the situation in \cite{BGT} where the model was chosen such that $\rho$ is constant, and thus 
the adiabatic evolutions were trivial in Fourier space. 

The first insight is that the operator $K_{n,0}^-$ given in (\ref{Kn0minus}) is sharply localized:
$K_{n,0}^- f$ will only be significantly different from zero if either $f$ or some of its derivatives have some support overlap with 
$\kappa_{n,0}^-$, which means they must be concentrated near the real solution of $\Re(\tau(q)) = \Re(\tau_\delta)$ that is closest to 
$q=0$. We will refer to this solution as the  transition point. In Section \ref{S:n} we will see that 
relevant values of $n$ are of the order $1/\eps$; furthermore, for large $n$ we have $(1+x^2)^
{-n} \approx \e{-n x^2}$, and so $\kappa_{n,0}^-$ and its derivatives are concentrated in a $\sqrt\eps$ neigbourhood of the 
transition point.  Since the time scale is chosen such that the 
semiclassical wave packets (which have width of order $\sqrt\eps$) travel at speed of order one, the dominant transitions come from a time
interval of order $\sqrt\eps$ around the transition time, which we define to be the time when the expected position of the incoming 
wave packet crosses the transition point. 

Let us pick a coordinate system so that that the transition time is $s=0$. We cannot, however, choose the 
transition point to be at $x=0$, since we have already fixed $x=0$ to be the local minimum of $\rho$. On the other hand, one of our later 
calculations relies on the fact that the transition point is at least in a $\sqrt{\eps}$ neighbourhood of $0$, see Section \ref{S: FT approx}. 
So from now on, we will always assume that the transition point does indeed have this property. This assumption can be justified by the 
observation that for sensible potentials, the real and imaginary parts of the complex zeroes of $\rho$ are coupled, and are either 
both relatively small or both large. However, in the latter case, transitions tend to be so small that they are physically uninteresting. 
That said, it would of course be much preferable to be able to treat arbitrary transitions, but we cannot do this yet. In what follows,
we will always pretend that the transition point is $x=0$, although for the calculation in the next paragraph below this is not yet strictly 
necessary. 

The above considerations allow us to replace the potential in the full adiabatic dynamics by its first Taylor approximation, as the following
formal calculation shows.
%We now argue (non-rigorously) that, having chosen the transition to occur at $t=0$ and $x=0$, for times of $\mathcal{O}(\sqrt{\eps})$, the 
%propagators are well-approximated by expanding the potential to first order around $x=0$.  
We take $H_1^\pm:=-\eps^2 \partial_x^2/2 \pm \delta + 
\lambda x$ and wish to show that $\e{-\tfrac{\ii}{\eps}sH^\pm} - \e{-\tfrac{\ii}{\eps}sH_1^\pm}$ is small.
We have
\[
\begin{split}
	\e{-\tfrac{\ii}{\eps}sH^\pm} - \e{-\tfrac{\ii}{\eps}sH_1^\pm} &= \e{-\tfrac{\ii}{\eps}sH_1^\pm}
	\big(\e{\tfrac{\ii}{\eps}sH_1^\pm}\e{-\tfrac{\ii}{\eps}sH^\pm} - 1\big) = \e{-\tfrac{\ii}{\eps}sH_1^\pm}
		\int_0^s \partial_r \big(\e{\tfrac{\ii}{\eps}sH_1^\pm} \e{-\tfrac{\ii}{\eps}sH^\pm} \big) \dd r\\ 
		& = \e{-\tfrac{\ii}{\eps}sH_1^\pm}
		\int_0^s  \e{\tfrac{\ii}{\eps}sH_1^\pm}\big( \tfrac{\ii}{\eps}(H_1^\pm-H^\pm)\big) \e{-\tfrac{\ii}{\eps}sH^\pm}\dd r.
\end{split}
\]
We now note that $H_1^\pm-H^\pm$ is quadratic near $x=0$ and hence the integrand is of order 1 in a $\sqrt{\eps}$ neigbourhood of zero.  Hence the left hand side is bounded by the length of the integration region $\sqrt{\eps}$ and, to leading order, it suffices to replace  (\ref{psinEvolution}) by
\be
	\psi_{n}^-(t) \approx - \ii \eps^{n}   \e{-\tfrac{\ii}{\eps}t H^-}\int_{-\infty}^t  \e{\frac{\ii}{\eps}s (-\eps^2 \partial_x^2/2-\delta+\lambda x)} K_{n+1}^- \e{-
\frac{\ii}{\eps} s  (-\eps^2 \partial_x^2/2+\delta+\lambda x) } \phi \, \dd s, \label{psinH1}
\ee
where we have not altered the $s$-independent propagator.

We now find it convenient to switch to the Fourier representation by applying the semiclassical Fourier transform
\be
	\epsft{f}(k) = \frac{1}{\sqrt{2 \pi \eps}} \int_\R \e{-\frac{\ii}{\eps}k q} f(q) \, \dd q = \frac{1}{\sqrt{\eps}}\hat{f}\big(\tfrac{k}{\eps}\big).
	\label{epsft}
\ee
We define $\hat{K}_n$ through $\hat{K}_n \epsft{\psi} = \epsft{K_n\psi}$, and a direct calculation \cite{BGT} gives
\[
	\hat K_{n,0}^{\pm} f(k) = \frac{1}{\sqrt{2 \pi \eps}}
	\int_\bbR \dd \eta \, \epsft{\kappa_{n,0}^{\pm}} (k - \eta) \left(\tfrac{\eta+k}{2}\right)^n f(\eta).
\]
Fourier transforming both sides of (\ref{psinH1}), we see that $\epsft{\psi_n^-}$ is given by a double integral:
\[
	\epsft{\psi_n^-}(k,t) \approx -\frac{\ii \eps^n}{\sqrt{2\pi\eps}} \e{-\tfrac{\ii}{\eps}t \hat{H}^-(k)} \int_{-\infty}^t \dd s \, \int_\bbR \dd \eta \, \e{\tfrac{\ii}{\eps}s \hat{H}_1^-(k)}
	\epsft{\kappa_{n+1,0}^{-}} (k - \eta) 
	 \left(\tfrac{\eta+k}{2}\right)^{n+1}
	 \e{-\tfrac{\ii}{\eps}s \hat{H}_1^+(\eta)}\epsft{\phi}(\eta),
\]
where $\hat{H}_1^{\pm}$ ($\hat{H}^\pm$) are the approximate (exact) adiabatic propagators in momentum space. 

By the Avron-Herbst formula, the approximate propagators are given exactly by
\[
	\e{-\frac{\ii}{\eps} s \hat H^\pm_1 (k)} = \e{-\ii \frac{\lambda^2 s^3}{6 \eps}}
	\e{\lambda s \partial_k} \e{-\frac{\ii}{2 \eps} ( (k^2 \pm 2 \delta) s - \lambda k s^2)}.
\]
In particular, we have
\begin{align}
	\label{propagators}
	\e{\tfrac{\ii}{\eps} s \hat{H}_1^-(k)} & = \e{\tfrac{\ii \lambda^2 s^3}{6\eps}} \e{-\lambda s \partial_k} \e{\tfrac{\ii}{2\eps}(k^2-2\delta)s} \e{\tfrac{\ii}{2\eps} \lambda k s^2} \\
	\e{-\tfrac{\ii}{\eps} s \hat{H}_1^+(\eta)} & = \e{-\tfrac{\ii \lambda^2 s^3}{6\eps}} \e{\lambda s \partial_\eta} \e{-\tfrac{\ii}{2\eps}(\eta^2-2\delta)s} \e{\tfrac{\ii}{2\eps} \lambda \eta s^2}. \notag
\end{align}
In order to make use of these expressions we must understand the effects of the shift operators, where $\e{\lambda s \partial_k}f(k)=f(k+\lambda s)$.  Using (\ref{propagators}) in (\ref{psinH1}) we note that, due to the invariance of the integral under $\eta \mapsto \eta - \lambda s$, we may apply the $\eta$ shift to the left with opposite sign.  Hence $\epsft{\kappa_{n+1,0}^-}(k-\eta)$ is unaffected and $(k+\eta)^{n+1} \mapsto (k+\eta-2\lambda s)^{n+1}$.  

Shifting the remaining propagator in $k$ by $-\lambda s$, the remaining multiplicative parts of the propagators are given by
\[
	\exp\big[\tfrac{\ii}{2\eps} \big( [(k-\lambda s)^2 - 2\delta]s + \lambda(k-\lambda s)s^2 -(\eta^2+2\delta)s + \lambda \eta s^2 \big) \big].
\]
Simplifying this expression and inserting it into (\ref{psinH1}) gives
\begin{align}
	\epsft{\psi_n^-}(k,t) &\approx -\frac{\ii \eps^n}{\sqrt{2\pi\eps}} \e{-\tfrac{\ii}{\eps}t \hat{H}_1^-(k)} \int_{-\infty}^t \dd s \, \int_\bbR \dd \eta \, (k+\eta-2\lambda s)^{n+1} \epsft{\kappa_{n+1,0}^-}(k-\eta) \notag \\
	 & \qquad \qquad \times \e{\tfrac{\ii}{2\eps} \big( (k^2-\eta^2-4\delta)s - (k-\eta)\lambda s^2 \big) }\epsft{\phi}(\eta). \label{psinminushat}
\end{align}

\subsection{Fourier transform of the coupling elements} \label{S: FT approx}
In order to make use of (\ref{psinminushat}), we require the Fourier transform of $\kappa_{n,0}^-$.  Using (\ref{epsft}) on (\ref{kappan0}) gives
\begin{align*}
	\epsft{\kappa_{n,0}^{-}}(k) &= \frac{1}{\sqrt{2\pi\eps}} \int \e{-\tfrac{\ii}{\eps} k q} \frac{\ii^{n+1}}{\pi} \rho(q) (n-1)! \Big[ \frac{1}{\big( \tau(q) - \tau_\delta^* \big)^n} - \frac{1}{\big( \tau(q) - \tau_\delta \big)^n} \Big] \dd q \\
	&= \frac{1}{\sqrt{2\pi\eps}} \int \e{-\tfrac{\ii}{\eps} k q(\tau)} \frac{\ii^{n+1}}{2 \pi} (n-1)! \Big[ \frac{1}{\big( \tau - \tau_\delta^* \big)^n} - \frac{1}{\big( \tau - \tau_\delta \big)^n} \Big] \dd \tau,
\end{align*}
where we have used $\dd \tau = 2 \rho(q) \dd q$.

It is now that we need the transition point to be at or near $q=0$. Provided this is so, we can use that $\rho$ has a minimum $\delta$ at 
$q=0$, and expand $q(\tau)=\tfrac{\tau}{2\delta} + \mathcal{O}(\tau^3)$. Note that no second order term is present. As the remainder of the 
integrand is concentrated in a $\sqrt\eps$ neighbourhood around $q=0$, we keep only the first order term, giving
\[
	\epsft{\kappa_{n,0}^{-}}(k) \approx \frac{1}{\sqrt{2\pi\eps}} \int \e{-\tfrac{\ii}{2 \delta \eps} k \tau} \frac{\ii^{n+1}}{2 \pi} (n-1)! 
	\Big[ \frac{1}{\big( \tau - \tau_\delta^* \big)^n} - \frac{1}{\big( \tau - \tau_\delta \big)^n} \Big] \dd \tau.
\]
We now note that
$
	\frac{1}{( \tau - \alpha )^n} = (-1)^{n-1} \frac{1}{(n-1)!}\partial_\tau^{n-1}\frac{1}{\tau-\alpha}
$
and hence
\begin{align*}
	\epsft{\kappa_{n,0}^{-}}(k) &\approx \frac{1}{\sqrt{2\pi\eps}} \frac{\ii^{n+1}}{2 \pi} (-1)^{n-1} \int \e{-\tfrac{\ii}{2 \delta \eps} k \tau}  \partial_\tau^{n-1} \Big[ \frac{1}{\big( \tau - \tau_\delta^* \big)} - \frac{1}{\big( \tau - \tau_\delta \big)} \Big] \dd \tau \\
	&= \frac{1}{\sqrt{2\pi\eps}} \frac{\ii^{n+1}}{2 \pi} (-1)^{n-1} \int \e{-\tfrac{\ii}{2 \delta \eps} k \tau}  \partial_\tau^{n-1} \Big[ \frac{-2 \ii \tau_c}{\big( (\tau-\tau_r) + \tau_c^2 \big)} \Big] \dd \tau.
\end{align*}

Using the identities $\epsft{f}(k)=\tfrac{1}{\sqrt{\eps}} \hat{f}\big(\tfrac{k}{\eps}\big)$, $\widehat{\partial_\tau^n f}(k)=(\ii k)^n \hat{f}(k)$, $\widehat{f(x-a)}(k)=\e{-\ii a k} \hat{f}(k)$ and the standard Fourier transform
\[
	\widehat{\tfrac{a}{x^2+a^2}} (k) = \sqrt{\tfrac{\pi}{2}} \e{-a |k|}
\]
gives
\[
	\epsft{\kappa_{n,0}^{-}}(k) \approx \ii \frac{\sqrt{2}\delta}{\sqrt{\pi \eps}} \frac{1}{(2\delta)^n} \Big( \frac{k}{\eps} \Big)^{n-1} \e{-\tfrac{\tau_c}{2\delta\eps} |k|} \e{-\ii \tfrac{\tau_r}{2\delta\eps} k},
\]
where we have used $\tau_\delta = \tau_r + \ii \tau_c$.  Inserting this formulation into (\ref{psinminushat}) gives
\begin{align}
	  \epsft{\psi_n^-}(k,t) &\approx-\frac{1}{4\pi\eps} \e{-\tfrac{\ii}{\eps}t \hat{H}^-(k)} \int_{-\infty}^t \dd s \, \int_\bbR \dd \eta \, (k+\eta) (1-\tfrac{2\lambda s}{k+\eta})^{n+1} \big( \tfrac{k^2-\eta^2}{4\delta} \big)^n \notag \\
	 & \qquad \qquad \times \e{ -\tfrac{\tau_c}{2\delta\eps} |k-\eta|} \e{ -\tfrac{ \ii \tau_r}{2\delta\eps} (k-\eta)}  \e{\tfrac{\ii}{2\eps} \big( (k^2-\eta^2-4\delta)s - (k-\eta) \lambda s^2 \big) }\epsft{\phi}(\eta). \label{full time evol}
\end{align}

\section{Evaluation of the integral}

\subsection{The choice of $n$}  \label{S:n}
Equation \eqref{full time evol} still depends on the parameter $n$, the order of the superadiabatic representation. For choosing $n$, 
we attempt to use the same argument that was employed in \cite{BGT} in order to obtain universal transition histories. 
The idea then and now 
is that the modulus of the integrand in \eqref{full time evol} depends on $n$, while the phase does not. We will thus try to 
choose $n$ such that stationary phase and maximal modulus occur at the same point, making it possible to perform asymptotic analysis on 
the 
integral. 

We recall 
the assumption that $\tau_r$ is small, and consider the imaginary part of the exponent. Indeed, 
we will set $\tau_{\rm r} =0$ in what follows. This simplifies the analysis and does not seem to greatly 
affect the accuracy of the final result. 
Differentiating the phase of  (\ref{full time evol}) with respect to $s$ and $\eta$ gives
\begin{align}
	(k^2-\eta^2-4\delta) - 2\lambda(k-\eta)s &=0 \label{ds0}\\
	-2\eta s - \lambda s^2 &=0 \label{deta0}.
\end{align}
Note that if $\lambda=0$ then there is only one solution, namely $k^2-\eta^2=4\delta$ and $s=0$; this remains a solution if $\lambda \neq 0$.  

For a simultaneous solution to (\ref{ds0}) and (\ref{deta0}) (i.e. stationary phase for both integrals) we require either $s=0$ and $k^2-
\eta^2-4\delta=0$, or $\lambda s =  - 2 \eta$ and $\eta$ the solution to $-5\eta^2 + 4 k \eta + k^2 - 4\delta=0$.  In the second case, for $k=
\mathcal{O}(1)$, we see that $\eta$ and hence $s$ are also of order 1.  We have already discussed that we expect the significant transitions to 
occur only when $s = \mathcal{O}(\eps^{1/2})$, and we therefore expect this solution to contribute only a negligible amount to the transmitted 
wave packet. So from the stationary phase condition, we obtain $
s = 0$ and $k^2-\eta^2-4\delta=0.$

For the modulus, we assume the case of a Gaussian wave packet of the form \eqref{gauss packet}. Differentiating the logarithm of the 
modulus with respect to $\eta$ and $s$ and equating to zero leads to the equations 
\begin{align}
	(n+1) \frac{2 \lambda}{\eta + k - 2 s} &= 0, \label{ds0 mod} \\ 
	2c(\eta-p_0) - \frac{\tau_{\rm c}}{2 \delta} \eta + n \eps \frac{2 \eta}{k^2-\eta^2} - (n+1) \eps \frac{2 \lambda s}{(k+\eta)
	(k+\eta - 2 \lambda s)}  &= 0. \label{deta0 mod}
\end{align}
Equations \eqref{ds0}--\eqref{deta0 mod} cannot be solved simultaneously, which shows an interesting difference of the present case 
when compared to the non-tilted case treated in \cite{BG} and \cite{BGT}. To make progress, we 
argue that the choice of the 
optimal superadiabatic representation should depend only weakly on the trace $\lambda$ of the potential. Therefore, we allow $\lambda$ to 
vary as well as $n$, $\eta$ and $s$, and obtain the joint solution $s=\lambda=0$, and $n$ and $\eta$ fulfilling 
$n = \frac{\tau_{\rm c}}{\eps k_0}$
with $k_0$ the solution of \eqref{n choice}. We will in future always use this value of $n$, denoted $n_0$.

\subsection{Rescaling} \label{S:Rescaling}
Recall that the wavepacket moves a distance of order 1 in time of order 1, and, for a semiclassical wavepacket, is of width of order $\eps^
{1/2}$.  Hence for times of order $\eps^\gamma$ with $\gamma<1/2$, in position space, the wavefunction is localised well away from the transition 
region.  It follows that there should be little contribution to the integral outside $s \in [-\eps^\gamma, \eps^\gamma]$ for $\gamma<1/2$.  We 
thus restrict the $s$-integral to this region.

We rewrite $(\ref{full time evol})$ as $ \tfrac{1}{4\pi\eps}  \exp(-\tfrac{\ii}{\eps}t \hat{H}^-(k)) \int_\R \dd \eta \, \int_{-\eps^\gamma}^{\eps^\gamma} \dd s \, g(k,\eta,s)$ with 
\begin{align*}
	g(\eta,k,s)=\exp\Big[ &n \log\big(\tfrac{k^2-\eta^2}{4\delta}\big) + \log(k+\eta) + (n+1)\log\big( 1 - \tfrac{2\lambda s}{k+\eta} \big) - \tfrac{\tau_c}{2\delta\epsilon}|k-\eta| \\
	& - i \tfrac{\tau_r}{2\delta\epsilon}(k-\eta) + \tfrac{i}{2\eps}\big[(k^2-\eta^2-4\delta)s - \lambda(k-\eta) s^2 \big] \Big] \epsft{\phi}(\eta).
\end{align*}
We now note that, in order for the phase of the integrand to be stationary in $s$, we expect $\eta \approx \eta^*=\pm\sqrt{k^2-4\delta}$.  For a semiclassical wave packet which has sufficient momentum to move past the avoided crossing, the choice of sign will correspond to the sign of the mean momentum of $\epsft{\phi}$.  For this choice of $\eta^*$ to make sense, it is clear that we require $k^2-4\delta\geq 0$, and so introduce the cutoff function $\chi_{k^2 > 4\delta}$.  The physical meaning of this cutoff is clear when one considers $\eta$ to be the incoming momentum and $k$ the outgoing momentum:  since the potential gap is $2\delta$, by energy conservation we have $k^2/2 = \eta^2/2+2\delta$, and since we require $\eta^2 > 0$ for the wave packet to move past the crossing we have $k^2 > 4\delta$.

We now set $\eta=\tilde\eta \eps^{1/2} + \eta^*$, where $\tilde\eta$ is of order 1 and rescale the $s$ integral by $s=\tilde s \eps^{1/2}$, which causes the domain of the $\tilde s$ integral to be at least of order 1, and tend to the whole of $\R$ as $\eps \to 0$.  Using $\int_\R \dd  \eta \int_{-\eps^\gamma}^{\eps^\gamma}  \dd s \, g(k,\eta,s) =\eps \int_\R \dd \tilde \eta \, \int_{-\eps^{\gamma-1/2}}^{\eps^{\gamma-1/2}}  \dd \tilde s \, g(k,\eps^{1/2}\tilde \eta+\eta^*,\eps^{1/2} \tilde s)$, and removing the tildes from now on, we are interested in
\begin{align}
	&g(\eta\eps^{1/2}+\eta^*,k,\eps^{1/2}s)=\exp\Big[ n \log\big(1-\tfrac{\eta^2\eps + 2\eta\eta^*\eps^{1/2}}{4\delta}\big) + \log(k+\eta^*+\eta\eps^{1/2}) \notag \\
	& \qquad + (n+1)\log\big( 1 - \tfrac{2\lambda s \eps^{1/2}}{k+\eta^*+\eta\eps^{1/2}} \big) 
	-\tfrac{\tau_c}{2\delta\epsilon}|k-\eta^*-\eta\eps^{1/2}|  
	-\ii \tfrac{\tau_r}{2\delta\epsilon}(k-\eta^*-\eta\eps^{1/2}) \notag\\
	& \qquad + \tfrac{\ii}{2\eps}\big[( -\eta^2\eps - 2\eta \eta^* \eps^{1/2})s\eps^{1/2} - \lambda(k-\eta^*-\eta\eps^{1/2}) s^2 \eps \big] \Big] \epsft{\phi}(\eps^{1/2} \eta + \eta^*). \label{rescaledg}
\end{align}
We now discuss the evaluation of these two integrals.

\subsection{The $s$ integral}

Since the wave function $\epsft{\phi}$ is independent of $s$, we now aim to perform the $s$-integration explicitly.  We now consider the regime where $\eps$ is small and $k$ is of order 1. This  is necessary as we wish to expand the logarithm term in powers of $s$, and require that $ \tfrac{2\lambda s \eps^{1/2}}{k+\eta^* + \eta\eps^{1/2}} \ll 1$.  This holds since, from the limits of integration, we see that at worst $s\eps^{1/2} \sim \eps^{\gamma}$ with $\gamma>0$ and $\eta^*\sim k \sim \eta \sim 1$. Expanding to second order gives
\begin{align}
	\log\big( 1 - \tfrac{2\lambda s \eps^{1/2}}{k+\eta^* + \eta\eps^{1/2}} \big) &\approx -\tfrac{2\lambda s \eps^{1/2}}{k+\eta^* + \eta\eps^{1/2}}  - \tfrac{2\lambda^2 s^2 \eps}{(k+\eta^* + \eta\eps^{1/2})^2}. \label{sLogExpansion}
%	&\approx \tfrac{2\lambda s \eps^{1/2}}{k+\eta^*}  - \tfrac{2\lambda^2 s^2 \eps}{(k+\eta^*)^2},
\end{align}
%where in the second line we have used that $\eta^* \sim k \gg 1$ and $\eta$ is of order 1.\\[1mm]
%
In the small-$\eps$ limit, $\eps^{\gamma-1/2} \to \infty$, which, combined with the above expansion reduces the $s$-integral to a Gaussian integral of the form
\[
	\int_\R \exp(\alpha s^2 + \beta s) \dd s = \sqrt{-\frac{\pi}{\alpha}} \exp\Big( {-\frac{\beta^2}{4 \alpha}} \Big), \; \mbox{ for } \Re(\alpha)<0.
\]
In this case, we have 
\[
	\alpha = -\tfrac{2(n+1)\lambda^2 \eps}{(k+\eta^*+\eps^{1/2}\eta)^2} - \tfrac{\ii \lambda}{2}(k-\eta^*) +\tfrac{\ii \eps^{1/2} \lambda}{2} \eta
\] 
(where $ \Re(\alpha)<0$) and 
\[
	\beta=-\tfrac{2\lambda(n+1)\eps^{1/2}}{k+\eta^*+\eps^{1/2}\eta} - \ii \eta^*\eta - \tfrac{\ii \eps^{1/2}}{2}\eta^2.
\]  
It therefore remains to calculate the integral over $\eta$:
\begin{align*}
& \epsft{\psi_n^-}(k,t)  = \frac{\chi_{k^2 > 4\delta}}{4 \sqrt \pi \eps} \e{-\frac{\ii}{\eps} t \hat{H}^-(k)} 
\int_\R \dd \eta \,  \epsft{\phi}(\eps^{1/2} \eta + \eta^*) \Big( \tfrac{2(n+1)\lambda^2 \eps}{(k+\eta^*+\eps^{1/2}\eta)^2} + \tfrac{\ii \lambda}{2}(k-\eta^*) -\tfrac{\ii \eps^{1/2} \lambda}{2} \eta \Big)^{-1/2}\\
&\times \exp\Big[ n \log\big(1-\tfrac{\eta^2\eps + 2\eta\eta^*\eps^{1/2}}{4\delta}\big) + \log(k+\eta^*+\eta\eps^{1/2}) 
-\tfrac{\tau_c}{2\delta\epsilon}|k-\eta^*-\eta\eps^{1/2}|  
-\ii \tfrac{\tau_r}{2\delta\epsilon}(k-\eta^*-\eta\eps^{1/2})  \Big] \\
	& \times \exp\Big[-\Big(-\tfrac{2\lambda(n+1)\eps^{1/2}}{k+\eta^*+\eps^{1/2}\eta} - \ii \eta^*\eta - \tfrac{\ii \eps^{1/2}}{2}\eta^2\Big)^2  
\Big(  -\tfrac{2(n+1)\lambda^2 \eps}{(k+\eta^*+\eps^{1/2}\eta)^2} - \tfrac{\ii \lambda}{2}(k-\eta^*) +\tfrac{\ii \eps^{1/2} \lambda}{2} \eta \Big)^{-1} \Big].
\end{align*}

For a general $\epsft{\phi}$, we can say little else, and the integral must be computed numerically.  However, in the important case where $\epsft{\phi}$ is a Gaussian, we can derive a closed-form approximation, which is in excellent agreement with the full dynamics.  The main idea is to approximate the integrand in (\ref{full time evol}) in such a way as to produce a Gaussian integral.  The first hinderance to this comes from the $\log$ terms, which we now consider.

\subsection{Expansion of $\log$ terms}

Along with the expansion in  (\ref{sLogExpansion}), we have
\[
	\log(k+\eta^*+\eta \eps^{1/2}) = \log(k+\eta^*) + \log\big( 1 + \tfrac{\eta\eps^{1/2}}{k+\eta^*} \big) \approx  \log(k+\eta^*)  + \tfrac{\eta\eps^{1/2}}{k+\eta^*} - \tfrac{\eta^2 \eps}{2(k+\eta^*)^2},
\]
\[
	\log\big( 1- \tfrac{\eta^2\eps + 2 \eta \eta^* \eps^{1/2}}{4\delta} \big) \approx -\tfrac{\eta^2\eps + 2 \eta \eta^* \eps^{1/2}}{4\delta} - \tfrac{1}{32\delta^2}( \eta^4\eps^2 + 4 \eta^3\eta^*\eps^{3/2} + 4\eta^2\eta^{*2}\eps),
\]
where we have once again used that $k, \eta^*, \eta  \sim 1$.  

In order to produce a Gaussian integral, it is necessary to make a number of justifiable approximations.  Expanding $(k+\eta^*+\eps^{1/2}\eta)^{-p}$, $p=1,2$ in (\ref{sLogExpansion}) around $\eta\eps^{1/2}=0$ and neglecting terms of order larger than $\eps$ in all three logarithm expansions reduces them to:
\begin{align}
	\log\big( 1 - \tfrac{2\lambda s \eps^{1/2}}{k+\eta^* + \eta\eps^{1/2}} \big) &\approx -\tfrac{2\lambda s \eps^{1/2}}{k+\eta^*} + \tfrac{2\lambda s \eps \eta }{(k+\eta^*)^2} - \tfrac{2\lambda^2 s^2 \eps}{(k+\eta^*)^2}, \notag \\
	\log(k+\eta^*+\eta \eps^{1/2}) &\approx  \log(k+\eta^*)  + \tfrac{\eta\eps^{1/2}}{k+\eta^*} - \tfrac{\eta^2 \eps}{2(k+\eta^*)^2}, \label{finalLogExpansions}\\
	\log\big( 1- \tfrac{\eta^2\eps + 2 \eta \eta^* \eps^{1/2}}{4\delta} \big) &\approx -\tfrac{\eta^2\eps + 2 \eta \eta^* \eps^{1/2}}{4\delta} - \tfrac{\eta^2\eta^{*2}\eps}{8\delta^2}. \notag
\end{align}
Note that all three expansions now contain terms of at most order two in $s$ and $\eta$ and thus are of the form required for a Gaussian integral.  

\subsection{Explicit closed form}

One final simplification is necessary to obtain a Gaussian integral: \eqref{rescaledg} still contains the third order terms,
namely $\tfrac{\ii \eps^{1/2}}{2} \eta^2 s$ and $\tfrac{\ii \lambda}{2 \eps^{1/2}}\eta s^2$. But here we recall that the staionary
phase argument required $s=0$, and in the scaled variables also $\eta=0$. 
This allows us to remove the above terms: not only are these terms already the highest order in $\eps$, 
but since we expect the main contribution to the integral to come from the region around $(s,\eta)=(0,0)$, the effects of these terms is 
negligible.  

Inserting the expansions (\ref{finalLogExpansions}) into (\ref{rescaledg}), ignoring the third order terms in $s$ and $\eta$, 
and setting $\epsft{\phi}(\eta)$ to be the Gaussian $\epsft{\phi}(\eta)=\exp(-\tfrac{c}{\eps}(\eta-p_0)^2)$ gives, 
for $\eta$ sufficiently small,
%\begin{align*}
%	&g(k,\eta\eps^{1/2}+\eta^*,\eps^{1/2}s)\\
%	&=\exp\Big[ n \big[  -\tfrac{\eta^2\eps + 2 \eta \eta^* \eps^{1/2}}{4\delta} - \tfrac{\eta^2\eta^{*2}\eps}{8\delta^2} \big] +  \tfrac{\eta\eps^{1/2}}{k+\eta^*}  - \tfrac{\eta^2\eps}{2(k+\eta^*)^2} \\
%	&\qquad \qquad + (n+1)\big[ \tfrac{2\lambda s \eps^{1/2}}{k+\eta^*} + \tfrac{2\lambda s \eps \eta }{(k+\eta^*)^2} - \tfrac{2\lambda^2 s^2 \eps}{(k+\eta^*)^2} \big]\\
%	&  \qquad \qquad - \ii \tfrac{\tau_\delta}{2\delta\epsilon^{1/2}}\eta  - \ii \eta^* \eta s \\
%	&  \qquad \qquad + \tfrac{\ii}{2}\lambda(k-\eta^*) s^2 -c\eta^2 -\tfrac{2c}{\eps^{1/2}}\eta(\eta^*-p_0)  \\
%	& \qquad \qquad -\tfrac{c}{\eps}(\eta^*-p_0)^2 + \ii \tfrac{\tau_\delta}{2\delta\epsilon}|k-\eta^*| + \log(k+\eta^*) \Big].
%\end{align*}
\[
g(k,\eta\eps^{1/2}+\eta^*,\eps^{1/2}s) = \exp( \alpha_{2,0} \eta^2 + \alpha_{1,0} \eta + \alpha_{1,1} \eta s + \alpha_{0,1} s + \alpha_{0,2} s^2 ),
\]
with the $\alpha_{i,j}$ given in \eqref{alphas}. 
%\begin{align*}
%	 &g(k,\eta\eps^{1/2}+\eta^*,\eps^{1/2}s)
%	\approx\exp\Big[  \eta^2\big[ -\tfrac{n\eps}{4\delta} - \tfrac{n \eta^{*2}\eps}{8\delta^2} - \tfrac{\eps}{2(k+\eta^*)^2} -c \big] \notag \\
%	&\qquad + \eta \big[ - \tfrac{n\eta^* \eps^{1/2} }{2\delta} + \tfrac{\eps^{1/2}}{k+\eta^*}- \tfrac{2c (\eta^* -p_0)}{\eps^{1/2}} +  \tfrac{\sgn(k)\tau_c}{2\delta\eps^{1/2}} +\ii  \tfrac{\tau_r}{2\delta\eps^{1/2}} \big] \\
%	&\qquad + \eta s \big[-\ii \eta^* + \tfrac{2(n+1)\lambda\eps}{(k+\eta^*)^2} \big] \notag \\
%	&\qquad  + s \big[ -\tfrac{2(n+1) \eps^{1/2} \lambda}{k+\eta^*} \big] + s^2 \big[ -\ii \tfrac{ \lambda (k-\eta^*)}{2} - \tfrac{2(n+1)\lambda^2 \eps}{(k+\eta^*)^2} \big] \\
%	&\qquad  -\tfrac{c}{\eps}(\eta^*-p_0)^2 - \tfrac{\tau_c}{2\delta\epsilon}|k-\eta^*| - \tfrac{\tau_r}{2\delta\epsilon}(k-\eta^*)+ \log(k+\eta^*) \Big]. \notag
%\end{align*}
Note that the $\sgn(k)$ in $\alpha_{1,0}$ is necessary if we wish to deal with negative momenta:  for $k>0$, we have $k-\eta^*>0$ and hence, for small $\eps$, $k-
\eta^*-\eps^{1/2}\eta>0$.  Therefore $|k-\eta^*-\eps^{1/2}\eta| = |k-\eta^*|-\eps^{1/2}\eta$.  However, for $k<0$ we have $k-\eta^*-\eps^
{1/2}\eta<0$ and $|k-\eta^*-\eps^{1/2}\eta| = |k-\eta^*|+\eps^{1/2}\eta$.

Gaussian integration now gives 
\be
	 \int_\R \int_\R \dd \eta  \, \dd s \; g(k,\eta\eps^{1/2}+\eta^*,\eps^{1/2}s) = \frac{2\pi}{\sqrt{4 \alpha_{2,0}\alpha_{0,2} - \alpha_{1,1}^2}} \exp\Big[ \frac{ \alpha_{2,0}\alpha_{0,1}^2 + \alpha_{0,2}\alpha_{1,0}^2 - \alpha_{1,0}\alpha_{0,1}\alpha_{1,1} } {\alpha_{1,1}^2 - 4\alpha_{2,0}\alpha_{0,2}}\Big], 
\ee
which holds for $\Re \big(\tfrac{\alpha_{1,1}^2}{\alpha_{2,0}}-4\alpha_{0,2})>0$.  
% or $\Re \big(\tfrac{\alpha_{1,1}^2}{\alpha_{0,2}}-4\alpha_{2,0})>0$
%In particular, we have
%\begin{align}
%	\alpha_{2,0}&=  -\tfrac{n\eps}{4\delta} - \tfrac{n \eta^{*2}\eps}{8\delta^2} - \tfrac{\eps}{2(k+\eta^*)^2} -c \notag \\
%	\alpha_{1,0}&= - \tfrac{n\eta^* \eps^{1/2} }{2\delta} + \tfrac{\eps^{1/2}}{k+\eta^*}- \tfrac{2c (\eta^* -p_0)}{\eps^{1/2}} +  \tfrac{\sgn(k)\tau_c}{2\delta\eps^{1/2}} +\ii  \tfrac{\tau_r}{2\delta\eps^{1/2}} \notag\\
%	\alpha_{1,1}&= -\ii \eta^*+ \tfrac{2(n+1)\lambda\eps}{(k+\eta^*)^2} \label{alphas}\\
%	\alpha_{0,1}&= -\tfrac{2(n+1) \eps^{1/2} \lambda}{k+\eta^*} \notag \\
%	\alpha_{0,2}&= -\ii \tfrac{ 2\delta \lambda}{(k+\eta^*)} - \tfrac{2(n+1)\lambda^2 \eps}{(k+\eta^*)^2}, \notag
%\end{align}
%where we have used $(k-\eta^*)(k+\eta^*)=4\delta$.

We now check that the above constraint $\Re \big(\tfrac{\alpha_{1,1}^2}{\alpha_{2,0}}-4\alpha_{0,2})>0$ is satisfied for a suitable parameter regime.  For ease of analysis, we note that $n_0$ is approximately given by $\tau_c/(\eps \sqrt{p_0^2+4\delta}) = \mathcal{O}(\eps^{-1})$.  Taking $\eps$ to be small, to leading order we find
\begin{align*}
 	\alpha_{2,0} &=-\tfrac{n_0\eps}{4\delta} - \tfrac{n_0 \eta^{*2}\eps}{8\delta^2} -c \approx -\tfrac{\tau_c}{4\delta \sqrt{p_0^2+4\delta}} - \tfrac{\tau_c \eta^{*2}}{8 \delta^2 \sqrt{p_0^2+4\delta}}  - c\\
	\alpha_{1,1} &= -\ii \eta^* + \tfrac{2 n_0 \eps \lambda}{(k+\eta^*)^2} \approx -\ii \eta^* + \tfrac{2 \tau_c \lambda}{(k+\eta^*)^2\sqrt{p_0^2+4\delta}}\\
	\alpha_{0,2}&= -\ii \tfrac{2\delta\lambda}{(k+\eta^*)}  - \tfrac{2(n_0+1)\lambda^2 \eps}{(k+\eta^*)^2} \approx -\ii \tfrac{2\delta \lambda}{(k+\eta^*)}  - \tfrac{2\tau_c\lambda^2}{(k+\eta^*)^2\sqrt{p_0^2+4\delta}}.
\end{align*}
Note that the real part of $-4\alpha_{0,2}$ is non-negative, so we need only check the sign of $\Re(\alpha_{1,1}^2/\alpha_{2,0})$.  Using $\alpha_{1,1}^2 = -\eta^{*2} + \tfrac{4 \tau_c^2 \lambda^2}{(k+\eta^*)^4(p_0^2+4\delta)} -  \ii \tfrac{4\tau_c\lambda \eta^*}{(k+\eta^*)^2\sqrt{p_0^2+4\delta}}$ gives
\[
	\Re(\tfrac{\alpha_{1,1}^2}{\alpha_{2,0}} - 4 \alpha_{0,2}) \geq \tfrac{8\delta^2}{(k+\eta^*)^4\sqrt{p_0^2+4\delta}} \Big[ \tfrac{ \eta^{*2}(k+\eta^*)^4(p_0^2+4\delta) - 4\tau_c^2\lambda^2}{8\delta^2 c \sqrt{p_0^2+4\delta} + 2\delta \tau_c + \tau_c p_0^2} \Big].
\]
Since $\tau_c>0$, this is clearly positive when $p_0$ is sufficiently large.  Hence the regime of interest is $\eps$ small and $p_0$ large.  We then have
\begin{align}
	\epsft{\psi_n^-}(k,t)  \approx  & \e{-\frac{\ii}{\eps} t \hat{H}^-}  \frac{1}{2\sqrt{4 \alpha_{2,0}\alpha_{0,2} - \alpha_{1,1}^2}} \exp\Big[ \frac{ \alpha_{2,0}\alpha_{0,1}^2 + \alpha_{0,2}\alpha_{1,0}^2 - \alpha_{1,0}\alpha_{0,1}\alpha_{1,1} } {\alpha_{1,1}^2 - 4\alpha_{2,0}\alpha_{0,2}}\Big] \notag\\
	& \times (\eta^*+k) \e{-\tfrac{c}{\eps}(\eta^*-p_0)^2} \e{-\tfrac{\tau_c}{2\delta\epsilon}|k-\eta^*|} \e{-\ii \tfrac{\tau_r}{2\delta\epsilon}(k-\eta^*)} \chi_{k^2 > 4\delta}, \label{explicitForm}
\end{align}
with the $\alpha_{i,j}$ as given in (\ref{alphas}).

We note that setting $\lambda=0$ gives $\alpha_{0,1}=\alpha_{0,2}=0$ and $\alpha_{1,1}=\ii \eta^*$, and returns the $n$-independent form (see \cite{BGT})
\[
	\epsft{\psi_n^-}(k,0)  \approx  \tfrac{(\eta^*+k)}{2|\eta^*|} \e{-\tfrac{c}{\eps}(\eta^*-p_0)^2}\e{-\tfrac{\tau_c}{2\delta\epsilon}|k-\eta^*|} \e{-\ii \tfrac{\tau_r}{2\delta\epsilon}(k-\eta^*)} \chi_{k^2 > 4\delta}.
\]

\subsection{Asymptotics of $\alpha_{i,j}$} \label{S:AlphaApprox}

We now show that, under suitable assumptions, the $\alpha_{i,j}$ given in (\ref{alphas}) may be somewhat simplified.  Note that $n_0=\mathcal{O}(\eps^{-1})$ and so $\alpha_{2,0}=\mathcal{O}(1)$ and $\alpha_{1,0}=\mathcal{O}(\eps^{-1/2})$.  However, the two terms which come from the $n$-independent prefactor $k+\eta$ in (\ref{full time evol})  are of lower order than the remaining terms in the respective $\alpha_{i,j}$, and hence, for small $\eps$ may safely be neglected.  From the point of view of exponential asymptotics this is completely natural; one would normally fix the slowly varying terms (i.e. independent of $\eps$, and in this case of $n$) at the stationary value of the integrand.  For clarity, we now have
\[
\alpha_{2,0}=  -\tfrac{n_0\eps}{4\delta} - \tfrac{n_0 \eta^{*2}\eps}{8\delta^2}  -c, \qquad 
	\alpha_{1,0}= - \tfrac{n_0\eta^* \eps^{1/2} }{2\delta} - \tfrac{2c (\eta^* -p_0)}{\eps^{1/2}} +  \tfrac{\sgn(k)\tau_c}{2\delta\eps^{1/2}} +\ii  \tfrac{\tau_r}{2\delta\eps^{1/2}}.
\]

One additional simplification is possible when $p_0$ is large and the wavefunction is quickly decaying (i.e. $c$ is also large).  In this case the modulus of the integrand is negligible unless $\eta^*$ is close to $p_0$.  For such a range, using $n_0\approx \tau_c/(\eps \sqrt{p_0^2+4\delta}) $, shows that the first three terms in $\alpha_{1,0}$ above are all negligible.  Further, if the potential is symmetric, $\tau_r=0$ and we may use the approximation $\alpha_{1,0}=0$.
In addition, in this limit, and with the assumption that $\lambda$ is not too large, we see that the second terms in each of $\alpha_{1,1}$ and $\alpha_{0,2}$ in (\ref{alphas}) are negligible.  To conclude, for $\eps$ small, $p_0$ and $c$ large and $\lambda$ not too large, we have
\begin{align*}
	&\alpha_{2,0}\approx  -\frac{\tau_c (2 \delta + {\eta^\ast}^2)}{8\delta^2\eta^*} -c, \quad 
	\alpha_{1,0}\approx   \frac{ \ii \tau_r}{2\delta\sqrt{\eps}}, \quad 
	\alpha_{1,1}\approx -\ii \eta^*, \notag\\
	&\alpha_{0,1}\approx -\frac{2 \lambda}{\sqrt{\eps}(k+\eta^*)\eta^*}, \quad
	\alpha_{0,2}\approx -\frac{ \ii 2\delta \lambda}{(k+\eta^*)}. \notag
\end{align*}

\subsection{Additional Phase shift} \label{S:PhaseShift}
While testing the formula (\ref{explicitForm}) against ab-initio numerics, we found a discrepancy by a phase shift which, in the region 
where the wavefunction has significant magnitude, is constant in $k$. We believe that this effect comes from one of the approximations detailed 
above, but have currently been unable to determine its exact cause.  For many applications this phase shift is unimportant. Since it is constant 
in $k$, all expected values of observables are correctly reproduced by \eqref{explicitForm} in the case of a single Gaussian wave packet. 
Where the phase shift begins to matter is for interference phenomena, and when considering a superposition of Gaussians (see below)
such that their centres are at significantly different locations in $k$; then the phase shift will not be constant in $k$ any more, and 
we will get wrong predictions for position expected values. 

It is therefore desirable to have a method of removing this effect of the approximations.  We now describe a heuristic method which has 
proven to be effective for a wide range of potentials and initial Gaussian wave packets.
Consider (\ref{explicitForm}) for $\lambda \neq 0$ and the wavefunction normalized by a prefactor $\sqrt{c/(\pi \eps)}$.  Note that if $
\lambda=0$ the following argument is invalid.  However, setting $\lambda=0$ in (\ref{explicitForm}) we see that the phase depends only on $\tau_r
$, which agrees with that of \cite{BG} and the corresponding numerics.

We are going to consider the phase of the transmitted wave function in the limit $c \to \infty$, i.e. the incoming wave packet approximating a 
$\delta$-function at $\eta=p_0$.  Since the numerical phase shift is independent of $k$, we need to choose a value of $k$ at which to evaluate 
this phase. In the classical picture, from energy conservation we see that the the transmitted wave packet should be approximately a $\delta$-
function at $k=\sqrt{p_0^2+4\delta}$, and hence we consider this value of $k$, where the sign of the square root is chosen to match that of 
$p_0$.

We are therefore interested in
\begin{align*}
	\epsft{\psi_n^-}(\sqrt{p_0^2+4\delta},0)  \approx  & \frac{\sqrt{c}}{\sqrt{\pi \eps}} \frac{1}{2\sqrt{4 \alpha_{2,0}\alpha_{0,2} - \alpha_{1,1}^2}} \exp\Big[ \frac{ \alpha_{2,0}\alpha_{0,1}^2 + \alpha_{0,2}\alpha_{1,0}^2 - \alpha_{1,0}\alpha_{0,1}\alpha_{1,1} } {\alpha_{1,1}^2 - 4\alpha_{2,0}\alpha_{0,2}}\Big]\\
	& \times a_0 \e{ \tfrac{- \tau_c}{2\delta\eps} |b_0|} \e{ \tfrac{- \ii \tau_r}{2\delta\eps} b_0} \chi_{k^2 > 4\delta},
\end{align*}
where $a_0=\sqrt{p_0^2+4\delta} + p_0$ and $b_0=\sqrt{p_0^2+4\delta} - p_0$. We now investigate the phase of this wave packet when $c \to \infty$ and note that there are contributions from both the square root and the exponent.

We write $\alpha_{2,0}=\beta_{2,0} -c$, and, since $\eta^*=p_0$, this is the only term that depends on $c$.  Consider first the prefactor:
\begin{align}
	\frac{\sqrt{c}}{\sqrt{4\alpha_{2,0}\alpha_{0,2}-\alpha_{1,1}^2}} &= \frac{\sqrt{c}}{\sqrt{4(\beta_{2,0}-c)\alpha_{0,2}-\alpha_{1,1}^2}}  = \frac{1}{\sqrt{-4\alpha_{0,2}+ \tfrac{1}{c}(4\beta_{2,0}\alpha_{0,2}-\alpha_{1,1}^2)}} \notag \\
	&\mathop{\to}^{c\to \infty} \frac{1}{\sqrt{-4 \alpha_{0,2}}}. \label{sqrtLimit}
\end{align}
For the exponent, we have
\begin{align}
	& \frac{\alpha_{2,0}\alpha_{0,1}^2+\alpha_{0,2}\alpha_{1,0}^2-\alpha_{1,0}\alpha_{0,1}\alpha_{1,1}} {\alpha_{1,1}^2-4\alpha_{2,0}\alpha_{0,2}} 
	= \frac{(\beta_{2,0}-c)\alpha_{0,1}^2+\alpha_{0,2}\alpha_{1,0}^2-\alpha_{1,0}\alpha_{0,1}\alpha_{1,1}}{\alpha_{1,1}^2-4(\beta_{2,0}-c)\alpha_{0,2}} \notag \\
	& \qquad = \frac{ -\alpha_{0,1}^2 + \tfrac{1}{c}( \beta_{2,0}\alpha_{0,1}^2 + \alpha_{0,2}\alpha_{1,0}^2 - \alpha_{1,0}\alpha_{0,1}\alpha_{1,1}) }{ 4 \alpha_{0,2} + \tfrac{1}{c}(\alpha_{1,1}^2 - 4\beta_{2,0}\alpha_{0,2})} 
	\mathop{\to}^{c\to \infty}  -\frac{\alpha_{0,1}^2}{4\alpha_{0,2}} \label{expLimit}.
\end{align}

It remains to determine the phases of (\ref{sqrtLimit}) and (\ref{expLimit}).  We write $\alpha_{1,0}=\beta_{1,0}$, $\alpha_{1,1}=\beta_{1,1}+\ii \gamma_{1,1}$, $\alpha_{0,1}=\beta_{0,1}$ and $\alpha_{0,2}=\beta_{0,2}+\ii \gamma_{0,2}$, with $\beta_{i,j}, \gamma_{i,j} \in \R$.  For (\ref{sqrtLimit}), we note that
$
	-\alpha_{0,2}=-(\beta_{0,2}+\ii \gamma_{0,2} )=: r\e{\ii \theta},
$
where $\theta = \arctan \Big(\tfrac{\gamma_{0,2}}{\beta_{0,2}}\Big)$, giving the phase of (\ref{sqrtLimit}) as
$
	-\frac{1}{2}\arctan \Big(\frac{\gamma_{0,2}}{\beta_{0,2}}\Big).
$
Using that $\gamma_{0,2}=-\lambda b_0 /2$, $\beta_{0,2}=-2(n_0+1)\lambda^2\eps/a_0^2$ and$a_0b_0=4\delta$ , this is
$
	-\frac{1}{2}\arctan \Big(\frac{a_0 \delta}{(n_0+1)\eps\lambda}\Big).
$

For the  (\ref{expLimit}) we have
$
	-\frac{\beta_{0,1}^2}{4(\beta_{0,2}+\ii \gamma_{0,2})} = -\frac{\beta_{0,1}^2(\beta_{0,2}-\ii \gamma_{0,2})}{4(\beta_{0,2}^2+ \gamma_{0,2}^2)}.
$
Hence, using that $\beta_{0,1}=2(n_0+1)\eps^{1/2}\lambda/a_0$, this phase is given by 
$
	-\frac{(n_0+1)^2\eps\lambda a_0 \delta}{2(n+1)^2\lambda^2 \eps^2 + 2\delta^2 a_0^2},
$
and the total phase by
\[
	-\frac{(n_0+1)^2\eps\lambda a_0 \delta}{2(n_0+1)^2\lambda^2 \eps^2 + 2\delta^2 a_0^2}
	+ \frac{1}{2}\arctan \Big(-\frac{a_0 \delta}{(n_0+1)\eps\lambda}\Big) - \frac{\tau_r}{2\delta\eps}b_0.
\]

One further adjustment seems to be necessary.  One would expect that the phase is continuous in $\lambda$, and we know that for $\lambda=0$, the phase is $- \frac{\tau_r}{2\delta\eps}b_0$.  However, the limit of the $\lambda$-dependent terms in  above expression is $-1/2 \arctan(\sgn(\lambda)\sgn(a_0)\infty)=-\sgn(\lambda)\sgn(p_0)\pi/4$ and hence we take the phase shift to be
\be
	\varphi(p_0)=-\frac{(n_0+1)^2\eps\lambda a_0 \delta}{2(n_0+1)^2\lambda^2 \eps^2 + 2\delta^2 a_0^2}
	- \frac{1}{2}\arctan \Big(\frac{a_0 \delta}{(n_0+1)\eps\lambda}\Big) + \sgn(\lambda p_0) \frac{\pi}{4}, \label{phaseShift}
\ee
which seems to give very good numerical results for a wide range of all parameters. 

To summarize, we now have an explicit closed form for the transmitted wave packet given an initial Gaussian of the form
\eqref{gauss packet}: 
%combining formula (\ref{explicitForm}) with the approximations to the $\alpha_{i,j}$ as described in Section \ref{S:AlphaApprox}, the correction 
%for non-zero transition time described in Section \ref{S:PhaseShift} and the (unjustified) phase correction above, we have
\begin{align}
	\epsft{\psi_n^-}(k,t)  \approx  & \e{-\frac{\ii}{\eps} t \hat{H}^-}  \frac{1}{2\sqrt{4 \alpha_{2,0}\alpha_{0,2} - \alpha_{1,1}^2}} \exp\Big[ \frac{ \alpha_{2,0}\alpha_{0,1}^2 + \alpha_{0,2}\alpha_{1,0}^2 - \alpha_{1,0}\alpha_{0,1}\alpha_{1,1} } {\alpha_{1,1}^2 - 4\alpha_{2,0}\alpha_{0,2}}\Big] \notag\\
	& \times (\eta^*+k)  \e{-\tfrac{\tau_c}{2\delta\epsilon}|k-\eta^*|} \e{-\ii \tfrac{\tau_r}{2\delta\epsilon}(k-\eta^*)} \e{-\ii \varphi(p_0)} \epsft{\phi}(\eta^*) \chi_{k^2 > 4\delta},
\end{align}
with $\varphi(p_0)$ as given in (\ref{phaseShift}) and the $\alpha_{i,j}$ as in (\ref{alphas}), or, alternatively, with the simplifications 
from Section \ref{S:AlphaApprox}. 
%approximated by
%\begin{align*}
%	\alpha_{2,0}&=  -\tfrac{n_0\eps}{4\delta} - \tfrac{n_0 \eta^{*2}\eps}{8\delta^2}  -c \notag \\
%	\alpha_{1,0}&= - \tfrac{n_0\eta^* \eps^{1/2} }{2\delta} - \tfrac{2c (\eta^* -p_0)}{\eps^{1/2}} +  \tfrac{\sgn(k)\tau_c}{2\delta\eps^{1/2}} +\ii  \tfrac{\tau_r}{2\delta\eps^{1/2}} + \ii \tfrac{x_0}{\eps^{1/2}}\notag\\
%	\alpha_{1,1}&= -\ii \eta^*+ \tfrac{2(n_0+1)\lambda\eps}{(k+\eta^*)^2} \notag\\
%	\alpha_{0,1}&= -\tfrac{2(n_0+1) \eps^{1/2} \lambda}{k+\eta^*} \notag \\
%	\alpha_{0,2}&= -\ii \tfrac{ 2\delta \lambda}{(k+\eta^*)} - \tfrac{2(n_0+1)\lambda^2 \eps}{(k+\eta^*)^2}, \notag
%\end{align*}
$n_0$ is given as indicated in \eqref{n choice}. We thus have finished the justification of the algorithm given in Section
\ref{S:algorithm}.

\subsection{Phase shift for large momentum}

For large momentum, we use the approximations $a_0=2p_0$ and $n_0=\tau_c/(\eps p_0)$ and get the phase shift $\varphi(p_0)$ as
$
	\varphi(p_0) \approx -\frac{1}{\eps}\frac{\tau_c^2 \lambda \delta p_0}{\tau_c^2\lambda^2+4\delta^2p_0^4} - \frac{1}{2}\arctan\Big( \frac{4 p_0^2\delta}{\tau_c \lambda} \Big) + \sgn(\lambda p_0) \frac{\pi}{4}
$
Note that if $p_0\to\infty$ then $\varphi(p_0) \to 0$.  More concretely, we are interested in the rate at which it goes to zero when we write $p_0$ in terms of $\eps$. Letting $p_0=\eps^{-\alpha}$ gives
\[
	\varphi \approx - \frac{1}{\eps}\frac{\tau_c^2 \lambda \delta}{\tau_c^2\lambda^2\eps^{1+\alpha}+4\delta^2 \eps^{1-3\alpha}} - \frac{1}{2}\arctan\Big( \frac{4\eps^{-2\alpha} \delta}{\tau_c \lambda} \Big) + \sgn(\lambda p_0) \frac{\pi}{4}.
\]
Hence if $\alpha>1/3$ we see that $\varphi(\eps^{-\alpha})\to 0$ as $\eps \to 0$.  We note that this value of $1/3$ is the same value as that for which we have rigorous bounds on the errors \cite{BG2}.

From this analysis, it appears that the phase shift is a consequence of taking momenta that are too small (or equivalently, $\eps$ that are too large).

\section{Non-Gaussian incoming wavefunctions}

\subsection{Extension to Hagedorn wavefunctions}
We note that a general Hagedorn wavefunction \cite{Ha94} is a Hermite polynomial multiplied by a Gaussian.  By linearity of the integral, it is sufficient to consider the case $\epsft{\phi}(\eta)=\eta^p \exp\big({-c}(\eta-p_0)^2/\eps\big)$, $p\in \N$.  We perform the same rescaling as in Section \ref{S:Rescaling} and note that the monomial prefactor becomes
$
	(\eta\eps^{1/2}+\eta^*)^p=\sum_{j=0}^p \binom{p}{j} (\eta\eps^{1/2})^j \eta^{*(p-j)}.
$
%Although na\"{\i}vely it would seem to suffice to keep the first few terms in $j$, we will now show that all terms are in fact of the same order.  
Using the same arguments as above, we obtain for each $j$ the integral
\begin{align*}
	& \int _\R  \int_\R \dd \eta \dd s \; (\eps^{1/2} \eta)^j \exp( \alpha_{2,0} \eta^2 + \alpha_{1,0} \eta + \alpha_{1,1} \eta s + \alpha_{0,1} s + \alpha_{0,2} s^2 ) 
\end{align*}

We now note that $ \partial_{\alpha_{1,0}}^j \exp( \alpha_{1,0} \eta ) = \eta^j  \exp(\alpha_{1,0} \eta )$ and since differentiation with respect to $\alpha_{1,0}$ commutes with the integral, we have
\begin{align*}
	& \int _\R  \int_\R \dd \eta  \, \dd s \;  (\eps^{1/2} \eta)^j \exp( \alpha_{2,0} \eta^2 + \alpha_{1,0} \eta + \alpha_{1,1} \eta s + \alpha_{0,1} s + \alpha_{0,2} s^2 ) \\
	& \qquad = \eps^j \partial_{\alpha_{1,0}}^j \frac{2\pi}{\sqrt{4 \alpha_{2,0}\alpha_{0,2} - \alpha_{1,1}^2}} \exp\Big[ \frac{ \alpha_{2,0}\alpha_{0,1}^2 + \alpha_{0,2}\alpha_{1,0}^2 - \alpha_{1,0}\alpha_{0,1}\alpha_{1,1} } {\alpha_{1,1}^2 - 4\alpha_{2,0}\alpha_{0,2}}\Big] \\
	& \qquad = \eps^j  \frac{2\pi}{\sqrt{4 \alpha_{2,0}\alpha_{0,2} - \alpha_{1,1}^2}} 
	\exp \Big[ \frac{ \alpha_{2,0}\alpha_{0,1}^2} {\alpha_{1,1}^2 - 4\alpha_{2,0}\alpha_{0,2}}\Big] \partial_{\alpha_{1,0}}^j
	\exp\Big[ \frac{\alpha_{0,2}\alpha_{1,0}^2 - \alpha_{1,0}\alpha_{0,1}\alpha_{1,1} } {\alpha_{1,1}^2 - 4\alpha_{2,0}\alpha_{0,2}}\Big].
\end{align*}

In more generality, we wish to compute $\partial_\alpha^j f$ where $f=\exp (-a \alpha^2 + b\alpha)=\exp( -\tfrac{2 a}{2}(\alpha-\tfrac{b}{2a})^2 + \tfrac{b^2}{4a})$.  It is clear that this will be $f$ multiplied by a scaled and shifted Hermite polynomial.  In fact, we have 
$
	\partial_\alpha^j f = (-\sqrt{2 a})^j H_j\big(\sqrt{2a} (\alpha - \tfrac{b}{2a})\big) f,
$
where $H_j$ is the probabilist's Hermite polynomial of order $j$ (namely chosen such that the coefficient of the leading order is 1).

In our case, we have $a=-\tfrac{\alpha_{2,0}}{\alpha_{1,1}^2-4\alpha_{2,0}\alpha_{0,2}}$ and $b=-\tfrac{\alpha_{0,1}\alpha_{1,1}}{\alpha_{1,1}^2-4\alpha_{2,0}\alpha_{0,2}}$, giving $\tfrac{b}{2a}=\tfrac{\alpha_{0,1}\alpha_{1,1}}{2\alpha_{0,2}}$.  Hence (\ref{full time evol}) with $\epsft{\phi}(\eta)=\eta^p \exp\big({-c}(\eta-p_0)^2/\eps\big)$ is given by
\begin{align*}
	& \epsft{\psi_n^-}(k,t)  \approx  \e{-\frac{\ii}{\eps} t \hat{H}^-(k)}  \frac{\chi_{k^2 > 4\delta}}{2\sqrt{4 \alpha_{2,0}\alpha_{0,2} - \alpha_{1,1}^2}} \exp\Big[ \frac{ \alpha_{2,0}\alpha_{0,1}^2 + \alpha_{0,2}\alpha_{1,0}^2 - \alpha_{1,0}\alpha_{0,1}\alpha_{1,1} } {\alpha_{1,1}^2 - 4\alpha_{2,0}\alpha_{0,2}}\Big]\\
	& \qquad \times (\eta^*+k) \e{-\tfrac{c}{\eps}(\eta^*-p_0)^2} \e{-\tfrac{\tau_c}{2\delta\epsilon}|k-\eta^*|} \e{-\ii \tfrac{\tau_r}{2\delta\epsilon}(k-\eta^*)}\\
	& \qquad \times \sum_{j=0}^p \binom{p}{j} \eps^j \eta^{*(p-j)} \Big(\frac{2\alpha_{0,2}}{\alpha_{1,1}^2-4\alpha_{2,0}\alpha_{0,2}}\Big)^{j/2} 
	 \times H_j\Big[\Big(\frac{2\alpha_{0,2}}{\alpha_{1,1}^2-4\alpha_{2,0}\alpha_{0,2}}\Big)^{1/2}  \Big(\alpha_{1,0}-\frac{\alpha_{0,1}\alpha_{1,1}}{2\alpha_{0,2}}\Big)\Big],
\end{align*}
with the $\alpha_{i,j}$ as given in (\ref{alphas}).

Using the identity
$
	H_p(x+y) = x^p \sum_{j=0}^p \binom{p}{j} x^{-j}H_j(y),
$
with
$
	x = \tfrac{\eta^*}{\eps} \big(\tfrac{2\alpha_{0,2}}{\alpha_{1,1}^2-4\alpha_{2,0}\alpha_{0,2}}\big)^{-1/2}$, and $y=\big(\tfrac{2\alpha_{0,2}}{\alpha_{1,1}^2-4\alpha_{2,0}\alpha_{0,2}}\big)^{1/2}  \big(\alpha_{1,0}-\tfrac{\alpha_{0,1}\alpha_{1,1}}{2\alpha_{0,2}} \big)
$
gives
\begin{align*}
	\epsft{\psi_n^-}(k,t)  \approx  & \e{-\frac{\ii}{\eps} t \hat{H}^-(k)}  \frac{\chi_{k^2 > 4\delta}}{2\sqrt{4 \alpha_{2,0}\alpha_{0,2} - \alpha_{1,1}^2}} \exp\Big[ \frac{ \alpha_{2,0}\alpha_{0,1}^2 + \alpha_{0,2}\alpha_{1,0}^2 - \alpha_{1,0}\alpha_{0,1}\alpha_{1,1} } {\alpha_{1,1}^2 - 4\alpha_{2,0}\alpha_{0,2}}\Big]\\
	& \times (\eta^*+k) \e{-\tfrac{c}{\eps}(\eta^*-p_0)^2}\e{-\tfrac{\tau_c}{2\delta\epsilon}|k-\eta^*|} \e{-\ii \tfrac{\tau_r}{2\delta\epsilon}(k-\eta^*)} \eps^p \Big(\frac{2\alpha_{0,2}}{\alpha_{1,1}^2-4\alpha_{2,0}\alpha_{0,2}}\Big)^{p/2}   \\
	&\times H_j\Big[ \frac{\eta^*}{\eps} \Big(\frac{2\alpha_{0,2}}{\alpha_{1,1}^2-4\alpha_{2,0}\alpha_{0,2}}\Big)^{-1/2} +   \Big(\frac{2\alpha_{0,2}}{\alpha_{1,1}^2-4\alpha_{2,0}\alpha_{0,2}}\Big)^{1/2}  \Big(\alpha_{1,0}-\frac{\alpha_{0,1}\alpha_{1,1}}{2\alpha_{0,2}}\Big)\Big].
\end{align*}

We are interested in the leading order behaviour with respect to $\eps$.  From (\ref{alphas}) and using $n_0 = \mathcal{O}(\eps^{-1})$ we see that $\alpha_{2,0}, \alpha_{1,1}, \alpha_{0,2}$ are all $\mathcal{O}(1)$ whilst $\alpha_{1,0}$ and $\alpha_{0,1}$ are $\mathcal{O}(\eps^{-1/2})$.  Hence 
$
	\tfrac{2\alpha_{0,2}}{\alpha_{1,1}^2-4\alpha_{2,0}\alpha_{0,2}} = \mathcal{O}(1)$, and 
	$\alpha_{1,0}-\tfrac{\alpha_{0,1}\alpha_{1,1}}{2\alpha_{0,2}} = \mathcal{O}(\eps^{-1/2}),
$
which in particular shows that the prefactor is $\mathcal{O}(\eps^n)$ whist the argument of the Hermite polynomial is $\mathcal{O}(\eps^{-1})$.  Thus, to leading order, only the highest power of the Hermite polynomial contributes, giving
\begin{align*}
	\epsft{\psi_n^-}(k,t)  \approx  & \e{-\frac{\ii}{\eps} t H^-}  \frac{1}{2\sqrt{4 \alpha_{2,0}\alpha_{0,2} - \alpha_{1,1}^2}} \exp\Big[ \frac{ \alpha_{2,0}\alpha_{0,1}^2 + \alpha_{0,2}\alpha_{1,0}^2 - \alpha_{1,0}\alpha_{0,1}\alpha_{1,1} } {\alpha_{1,1}^2 - 4\alpha_{2,0}\alpha_{0,2}}\Big]\\
	& \times (\eta^*+k) \e{-\tfrac{c}{\eps}(\eta^*-p_0)^2}\e{-\tfrac{\tau_c}{2\delta\epsilon}|k-\eta^*|} \e{-\ii \tfrac{\tau_r}{2\delta\epsilon}(k-\eta^*)} \eta^{*p} \chi_{k^2 > 4\delta},
\end{align*}
which is precisely (\ref{explicitForm}) with the Gaussian replaced by $\eta^p \exp\big({-c}(\eta-p_0)^2/\eps\big)$.

We note that the error in this closed form is expected to be of order $\sqrt{\eps}$.  Whilst this could be improved by taking further terms in the expansion, in the following we choose to concentrate on the case of a wave packet which has been decomposed into a linear combination of complex Gaussians.  The main reason for this is the heuristic phase correction which is discussed in Section \ref{S:PhaseShift}.  From numerical studies, we see that this works well only for Gaussian wave packets, and without this correction, the relative error between the formula and the `exact' numerical wave packet is of the order of $10\%$, compared to an error of around $2\%$ for a Gaussian wave packet with the phase correction.

\subsection{General wave packets as superpositions of Gaussians} \label{S:GeneralWfn}
Due to the strong reliance on the wave packet being a Gaussian in the preceding discussion, formula (\ref{explicitForm}) is not immediately 
applicable to general wave packets.  However, we propose a simple algorithm which allows use of (\ref{explicitForm}).  For a given semi-classical 
wave packet specified on the upper level well away from the crossing region, we evolve it using the BO dynamics on the upper level until the mean 
position of the wave packet coincides with the crossing point (which we choose without loss of generality to be at $x=0$).  We then transform 
into Fourier space and decompose into complex Gaussians, giving a wave packet of the form
\be
	\sum_{j=1}^N A_j \exp\big( -\tfrac{(p-p_i)^2}{\sigma_j^2 \eps} \big) \exp\big(\ii \tfrac{p x_j}{\eps} \big),
	\label{linearcombination}
\ee
where in position space $x_j$ is the offset from the crossing point.

We now need to deal with the fact that the Gaussians reach the crossing point at different times.  We fist note that, for small $\eps$, this 
should be a small effect for semiclassical wavepackets:  since the wave packet is localised in a $\sqrt{\eps}$ neighbourhood of zero in position 
space, we have $x_j = \mathcal{O}(\sqrt{\eps})$.  As discussed in Section \ref{S:ApproxProp}, on a $\sqrt{\eps}$ neighbourhood of the origin and 
for times of order $\sqrt{\eps}$, the dynamics are well-approximated by the explicit propagators (\ref{propagators}).  Since the wave packets 
move with speed of order one, this is still the region of interest and we may simply insert the complex Gaussian into (\ref{full time evol}).  

Applying the rescaling as described in Section \ref{S:Rescaling} gives an extra term in the exponent in (\ref{rescaledg}) of the form $\ii x_j 
(\eta^* + \eps^{1/2} \eta)/\eps$.  The $\eta^*$ term combines with the Gaussian term in $\eta^*$ to give the wave packet evaluated at $\eta^*$ as 
before.  The remaining term provides a contribution of the form $\ii x_j / \eps^{1/2}$ to $\alpha_{1,0}$ in (\ref{alphas}).  

It is now easy to see that in the small $\eps$ limit this term is negligible.  Since $x_j = \mathcal{O}(\sqrt{\eps})$ we see that the new term in 
$\alpha_{1,0}$ is order one.  In contrast, the dominant terms in $\alpha_{1,0}$ are of order $\eps^{-1/2}$ and one may apply (\ref{explicitForm}) 
directly to the complex Gaussian.

We note that for the values of $\eps$ under consideration in the numerics, ignoring this correction increases the relative error by the order of 
$0.1\%$, which is quite significant given the high accuracy of the final formula. In the implementation of the non-Gaussian wave packet below,  
we therefore included the additional term  $\ii x_j / \eps^{1/2}$ in the expression for $\alpha_{0,1}$.  

The above analysis suggests a simple and efficient algorithm for calculating the form of the transmitted wave packet, even if not of Gaussian
form, given an initial wave 
packet $\psi_{-\infty}$ located well away from the transition point in position space.
\begin{enumerate}
\item Evolve the initial wave packet on the upper BO level using the uncoupled BO dynamics until its centre of mass reaches the transition point.  
This can either be pre-determined by finding the point at which the two energy levels are closest, or, as would be required in higher dimensions, 
by monitoring the energy gap at the centre of mass over time and determine its minimum.
\item Transform the resulting wave packet into momentum space and decompose into a linear combination of complex Gaussians as in (\ref
{linearcombination}). 
\item Apply formula (\ref{numericsFormula}) to each complex Gaussian in turn and take the corresponding linear combination.
\item Evolve the resulting transmitted wave packet using the BO dynamics on the lower level, until the centre of mass reaches the scattering 
region.
\end{enumerate}

Assuming that the energy levels become constant in the scattering regime, the computed wave packet will agree up to 
small errors with that computed using the full coupled dynamics.

Note that step (2) may be accomplished using a standard numerical recipe such as non-linear least squares optimization. In practice the formula is more accurate for narrow Gaussians (since this improves a number of the approximations including the choice of fixed $n_0$ and the heuristic phase shift) and thus it may be worth constraining the variances of the Gaussians.  Since the application of the formula is cheap (simply multiplications in Fourier space over a region in which the modulus of the wave packet is significant -- comparable to one time step in uncoupled B-O dynamics) and step (3) scales linearly with the number of Gaussians, increasing the number of Gaussians whilst decreasing their variances would be a reasonable approach to increase accuracy.

It is important to realise that, although this algorithm performs a molecular dynamics calculation using Gaussian wave 
packets, it does not share the obstructions of most Gaussian-based methods (see e.g.\ \cite{Saw85}). These occur mainly due 
to the Gaussians being not orthogonal, and the resulting ill-conditioning of various matrices under time evolution.  
Since we only require that the wave packet is decomposed into Gaussians at the crossing point, transmitted, and re-summed 
on the lower level, we do not encounter such problems.  In fact, one is free to choose any method of propagation on the 
adiabatic levels, for example the method of Hagedorn wave packets \cite{Lubich08}, something which will be important in 
higher dimensions, where simple grid-based methods are prohibitively expensive.

\section{Numerics} \label{S:Numerics}

\begin{figure}
	\includegraphics[width=0.9\linewidth]{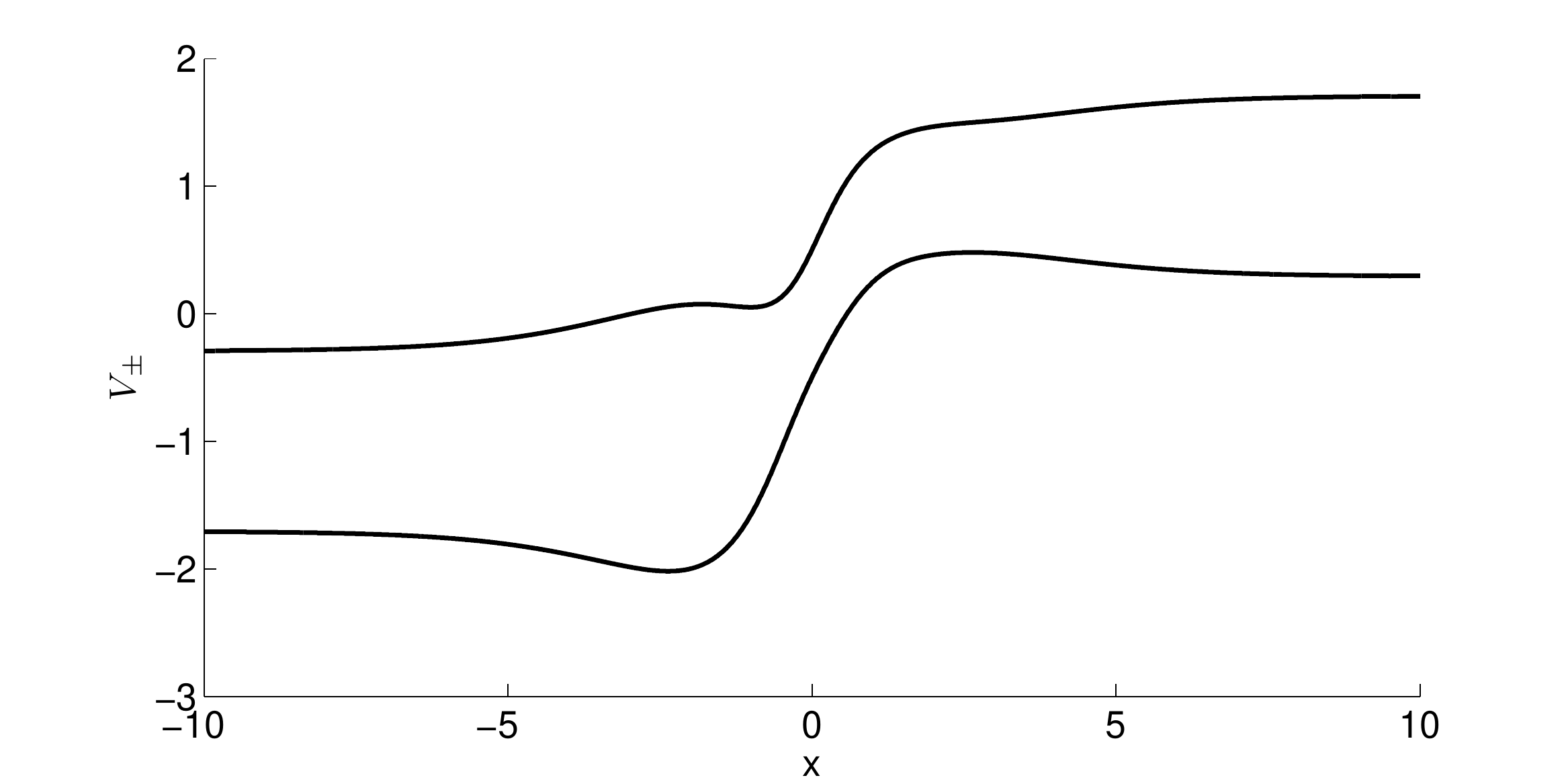}
	\caption{The two adiabatic energy surfaces $V_{\pm}=\pm\rho(x) + d(x)$ with $\rho(x)=\sqrt{X^2(x)+Z^2(x)}$. $Z=\alpha \tanh(x) + \beta x^2/\cosh(x)$, $X=\delta$, $d=\lambda \tanh(x)$, with the parameters $\alpha=0.5$, $\beta=-0.4$, $\delta=0.5$ and $\lambda=1$.  Note that the avoided crossing (minimum of the energy gap) is at $x=0$.}
	 \label{F:Potentials}
\end{figure}

\begin{figure}
	\includegraphics[width=0.9\linewidth]{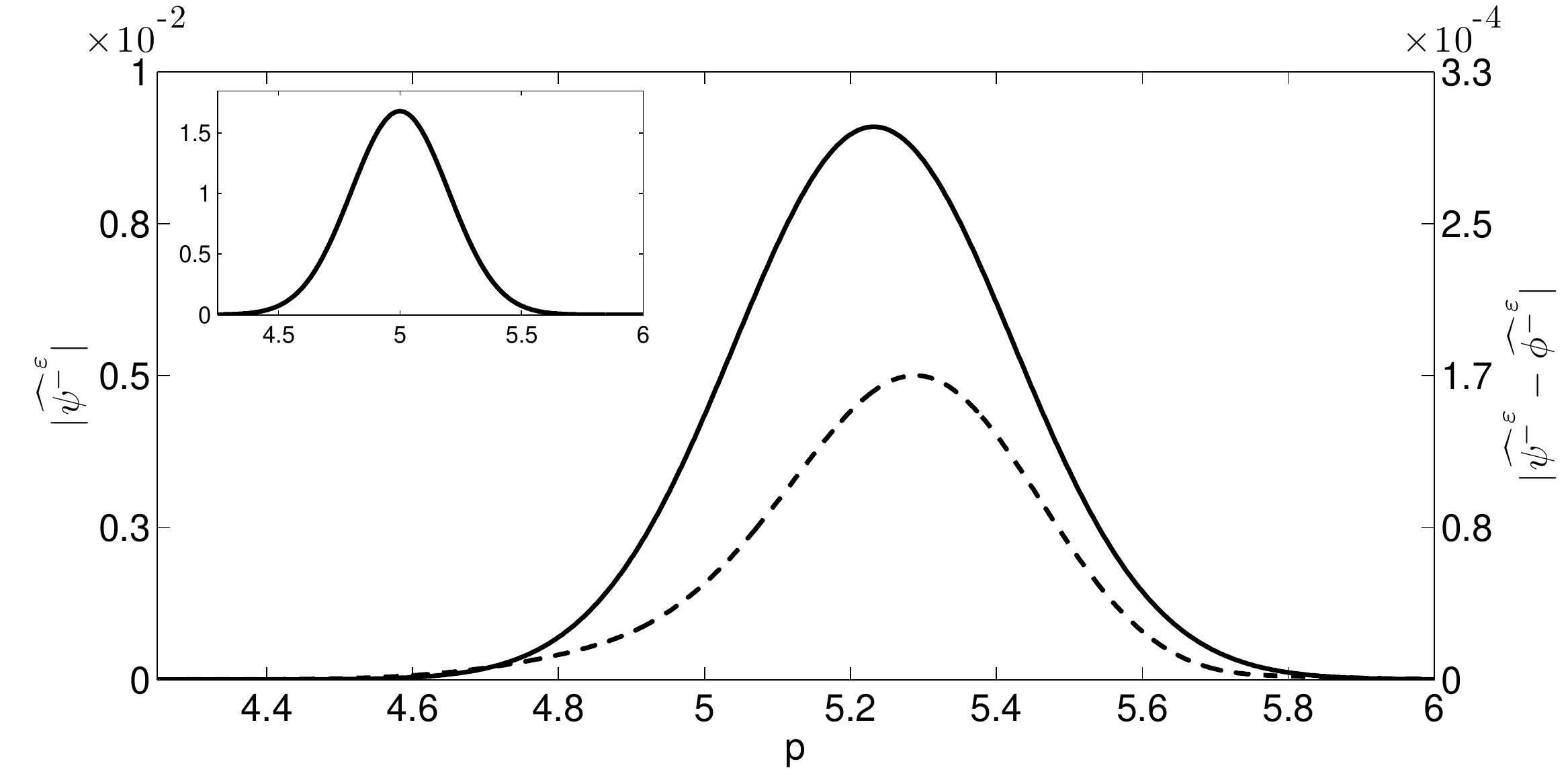}\\
	\includegraphics[width=0.9\linewidth]{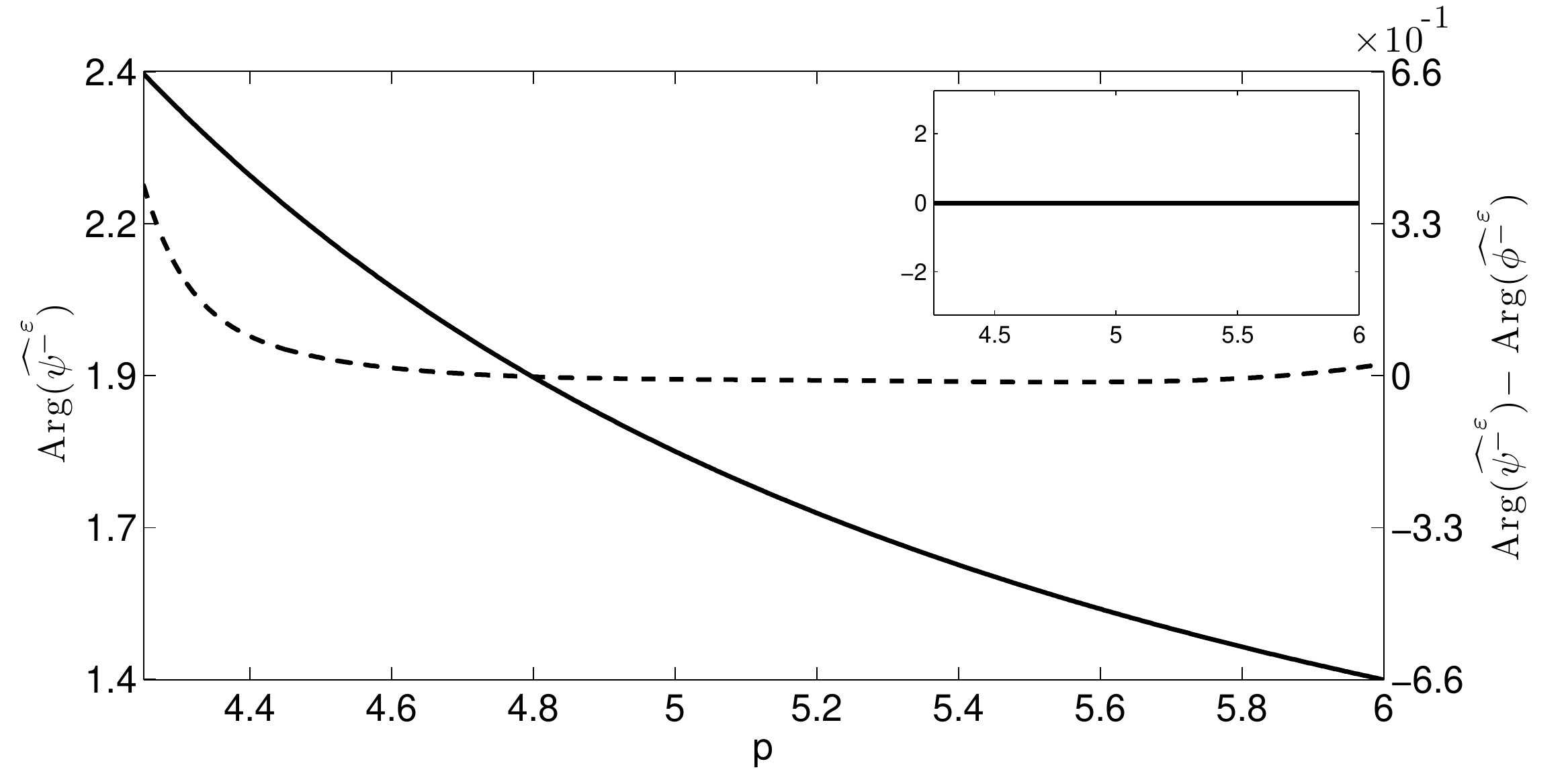}
	\caption{Top: The absolute value of the transmitted wave packet $\epsft{\psi^-}$ as given by (\ref{numericsFormula}) (solid, left axis) and the error as compared to the numerical solution $\epsft{\phi^-}$ computed as described using the fully coupled dynamics (dashed, right axis).  Inset is the initial Gaussian wave packet in momentum space at the transition time.
	Bottom: As in the top plot but showing the argument (phase) of the wave packets.}
	\label{F:Gaussian}
\end{figure}

\begin{figure}
	\includegraphics[width=0.9\linewidth]{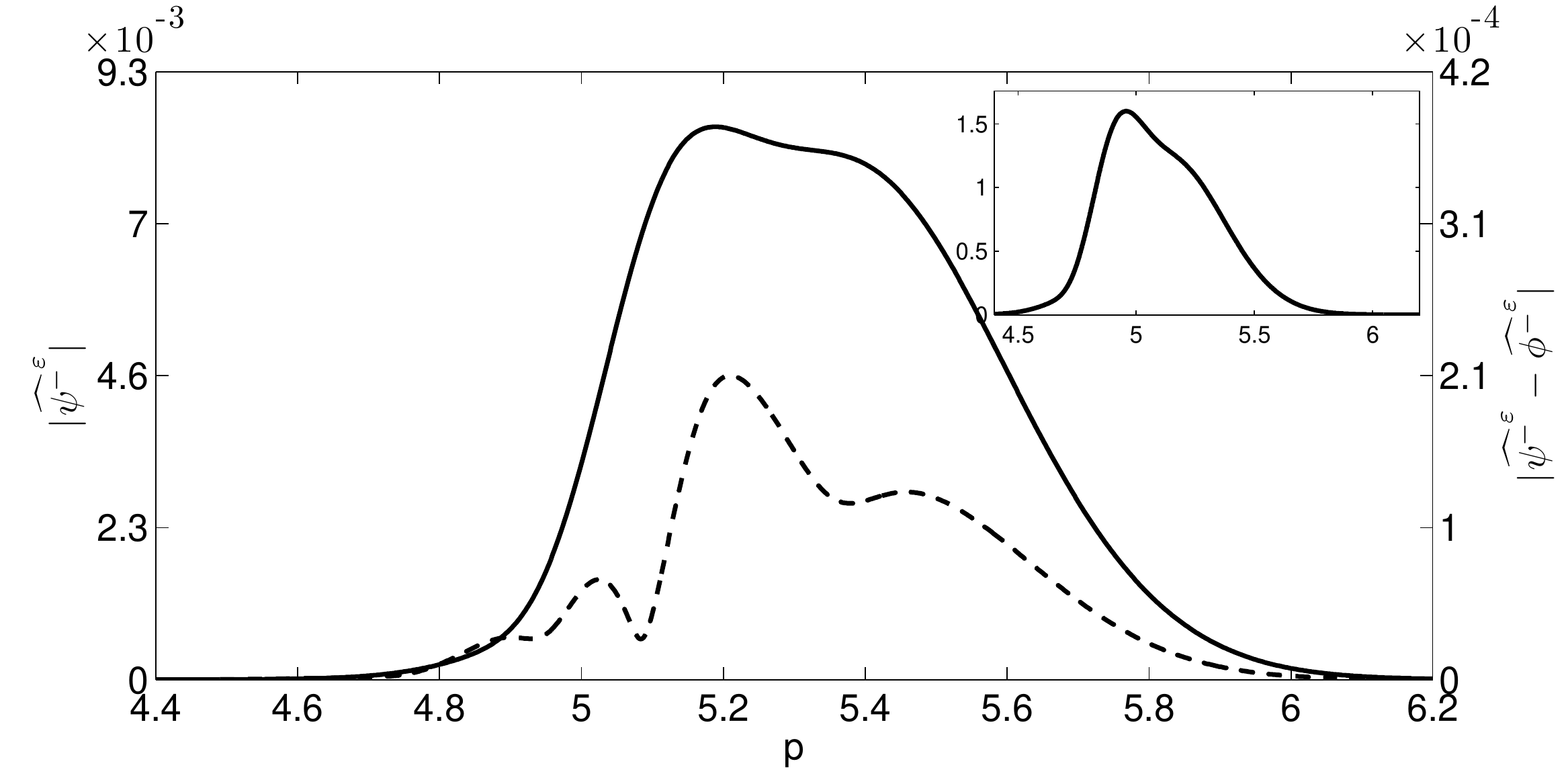}\\
	\includegraphics[width=0.9\linewidth]{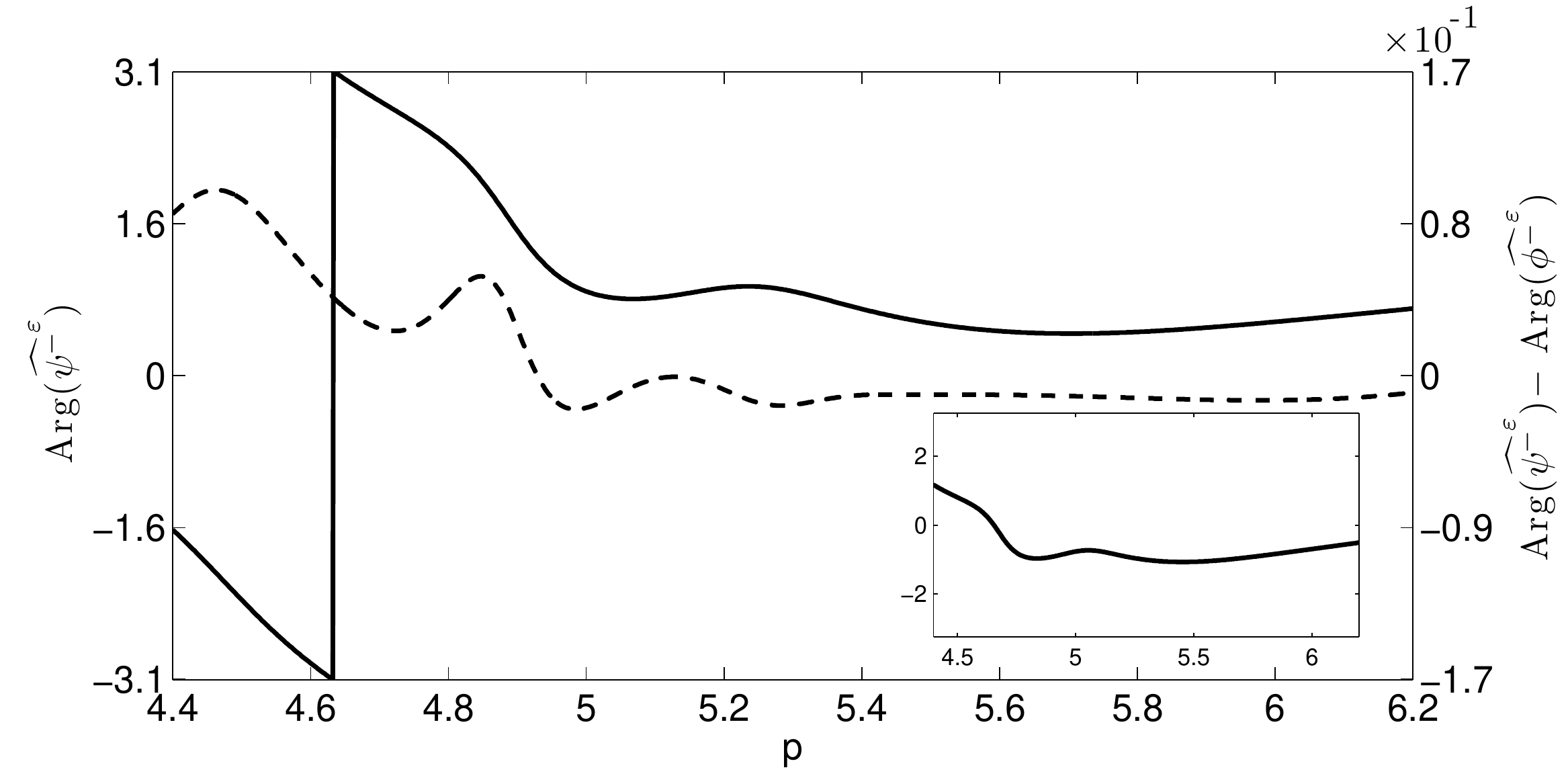}
	\caption{As Figure \ref{F:Gaussian} but for the non-Gaussian wave packet described in the text.}
	\label{F:NonGaussian}
\end{figure}

We now compare the results of formula (\ref{numericsFormula}) to those of high-precision fully-coupled numerics.  
For ease of demonstration, we set the transition point to be at $x=0$ and $t=0$ and choose to specify the initial wave packet $\phi$ as a linear 
combination of complex Gaussians in momentum space at the crossing time.  This simplifies the implementation of the above algorithm, as $\epsft{\phi}$ is already in the 
required form.  Further, we set $\eps=1/50$, which gives reasonably small transition probabilities, whilst still enabling the `exact' calculations to be performed.  Note that, when transformed to position 
space, both examples have mean position zero.  

To begin both the full numerics and the implementation of the above algorithm, we evolve $\phi$ on the upper BO surface to large negative time 
(i.e. to a position where the potentials are essentially flat) to give a good approximation $\phi_{-T} \approx \psi_{-\infty}$.

The full numerics were performed using a symmetric Strang splitting in {\sc matlab} with initial condition $\phi_{-T}$, which is run to a time 
$t_*>0$ where once again the potentials are essentially flat.  In particular, for times  $t > t_*$, the lower component $\|\psi^-_0(t)\|$ is 
constant.  We then evolve $\psi^-_0(t_*)$ backwards in time to $t=0$ and compare its Fourier transform to formula (\ref{numericsFormula}).  The 
calculation was performed on a grid with 16,384 points in both the position ($[-40,40]$) and corresponding momentum ($[-12.87,12.87]$) spaces, 
with $T=t_*=4$ and 1000 time steps.  Doubling both the  number of space and time gridpoints produces a wave function which differs from this computation by around $0.01\%$  in the $L^2$ norm, and hence we take the numerical simulation to be `exact'.

We choose $Z=\alpha \tanh(x) + \beta x^2/\cosh(x)$, $X=\delta$ and $d=\lambda \tanh(x)$.  For these choices, $\delta$ and $\lambda$ correspond to 
their earlier use, the ratio $\alpha^2/\delta$ determines the second derivative of $\rho$ at the transition point, and $\beta$ primarily affects 
the asymmetry of the potential.  In particular, $\beta=0$ gives $\tau_r=0$.  We set $\alpha=0.5$, $\beta=-0.4$, $\delta=0.5$ and $\lambda=1$.  
This leads to the two potential surfaces given in Figure \ref{F:Potentials}, with $\tau_\delta=-0.16611+0.53772\ii$, which can be easily 
calculated numerically.

The first wave packet we treat is given by the complex Gaussian $A \exp \big( {-c} (p-p_0)^2 /(2\eps) \big)$, with $p_0$=5, $c=1/(2\sigma^2)$, $\sigma=\sqrt{2}$ and $A$ chosen such that 
the wave packet is normalized in $L^2$.  The second case we consider is a linear combination of three complex Gaussians of the form (\ref
{linearcombination}) where $|A_j|=A$, $j=1,2,3$, which in turn is chosen to normalize the wave packet.  The remaining parameters are given, with 
$c=1/(2\sigma^2)$ by
\begin{center}
\begin{tabular}{cccc}
	\hline
	$A_j$ & $p_j$ & $\sigma_j$ & $x_j$ \\
	\hline
	$A$ & 5.00 & 1.414 & -0.0238\\
	$A$ & 5.15 & 1.664 & 0.0186\\
	$-A$ & 4.90 & 0.714 & 0.0328\\
	\hline
\end{tabular}
\end{center}

In both cases, the relative error is less than $2\%$ over the full interval where the transmitted wave function is essentially supported. 
The transition probability $\| \psi^- \|^2$ in both cases is of the order $10^{-5}$ ($3.03 \times 10^{-5}$ and $3.48 \times 10^{-5}$ for the 
Gaussian and non-Gaussian cases respectively).  In addition to these two examples, we have tested a wide range of parameters for the  both the 
potentials and semi-classical wave functions, and all results are good to within a few percent.  They deteriorate only when $\eps$ (and thus also 
$\|\psi^-\|$) becomes too large and we leave the adiabatic regime, or when $p_0$ (and thus also $\|\psi^-\|$) gets too small and our many 
approximations requiring that $p_0$ is suitably large break down.  In particular, the relative error is less than a few percent when the 
transition probability is in the range $10^{-2}$--$10^{-15}$.


\begin{thebibliography}{99}

\bibitem{BG}
V.~Betz and B.~D. Goddard.
\newblock Accurate prediction of non-adiabatic transitions through avoided
  crossings.
\newblock {\em Phys. Rev. Lett.}, 103:213001, 2009.

\bibitem{BG2}
V.~Betz and B.~D. Goddard.
\newblock Transitions through avoided crossings in the high momentum regime.
\newblock {\em in preparation}, 2010.

\bibitem{BGT}
V.~Betz, B.~D. Goddard, and S.~Teufel.
\newblock Superadiabatic transitions in quantum molecular dynamics.
\newblock {\em Proc. Roy. Soc. A}, 465(2111):3553--3580, 2009.

\bibitem{BeLi93}
MV~Berry and R~Lim.
\newblock Universal transition prefactors derived by superadiabatic
  renormalization.
\newblock {\em J Phys A-Math Gen}, 26(18):4737--4747, Jan 1993.

\bibitem{BT05-2}
Volker Betz and Stefan Teufel.
\newblock Precise coupling terms in adiabatic quantum evolution: the generic
  case.
\newblock {\em Comm. Math. Phys.}, 260(2):481--509, 2005.

\bibitem{Ha94}
George~A. Hagedorn.
\newblock Molecular propagation through electron energy level crossings.
\newblock {\em Mem. Amer. Math. Soc.}, 111(536):vi+130, 1994.

\bibitem{HaJo04}
George~A Hagedorn and Alain Joye.
\newblock Time development of exponentially small non-adiabatic transitions.
\newblock {\em Comm. Math. Phys.}, 250(2):393--413, 2004.

\bibitem{HaJo05}
George~A Hagedorn and Alain Joye.
\newblock Determination of non-adiabatic scattering wave functions in a
  born-oppenheimer model.
\newblock {\em Ann. Henri Poincar{\'e}}, 6(5):937--990, 2005.

\bibitem{Las08}
C.~Lasser and T.~Swart.
\newblock Single switch surface hopping for a model of pyrazine.
\newblock {\em J. Chem. Phys}, 129:034302, 2008.

\bibitem{Las07}
C.~Lasser, T.~Swart, and S.~Teufel.
\newblock Construction and validation of a rigorous surface hopping algorithm
  for conical crossings.
\newblock {\em Commun. Math. Phys.}, 5:789--814, 2007.

\bibitem{Lubich08}
C.~Lubich.
\newblock {\em From Quantum to Classical Molecular Dynamics: Reduced Models and
  Numerical Analysis}.
\newblock European Math. Soc., 2008.

\bibitem{Nak02}
H.~Nakamura.
\newblock {\em Nonadiabatic Transition}.
\newblock World Scientific, Singapore, 2002.

\bibitem{RRZ89}
Todd~S. Rose, Mark~J. Rosker, and Ahmed~H. Zewail.
\newblock Femtosecond real-time probing of reactions. iv. the reactions of
  alkali halides.
\newblock {\em The Journal of Chemical Physics}, 91(12):7415--7436, 1989.

\bibitem{Saw85}
S-I Sawada, R.~Heather, B.~Jackson, and H.~Metiu.
\newblock A strategy for time dependent quantum mechanical calculations using a
  gaussian wave packet representation of the wave function.
\newblock {\em J. Chem. Phys}, 83:3009--3027, 1985.

\bibitem{Teu03}
Stefan Teufel.
\newblock {\em Adiabatic perturbation theory in quantum dynamics}, volume 1821
  of {\em Lecture Notes in Mathematics}.
\newblock Springer-Verlag, Berlin, 2003.

\bibitem{Tul90}
J.~C. Tully.
\newblock Molecular dynamics with electronic transitions.
\newblock {\em J. Chem. Phys}, 93:1061--1071, 1990.

\bibitem{Vor98}
A.~I. Voronin, J.~M.~C. Marques, and A.~J.~C. Varandas.
\newblock Trajectory surface hopping study of the li +
  li$_2$(x$^1$$\sigma_g^+$) dissociation reaction.
\newblock {\em J. Phys. Chem. A}, 102(30):6057--6062, 1998.

\bibitem{vNW29}
J.~Von~Neumann and E~Wigner.
\newblock {\"U}ber das {V}erhalten von {E}igenwerten bei {A}diabatischen
  {P}rozessen.
\newblock {\em Phys Z}, 30:467, 1929.

\bibitem{Scho94}
Q.~Wang, R.~W. Schoenlein, L.~A. Peteanu, R.~A. Mathies, and C.~V. Shank.
\newblock Vibrationally coherent photochemistry in the femtosecond primary
  event of vision.
\newblock {\em Science}, 266(5184):422--424, 1994.

\bibitem{Zen32}
D.~Zener.
\newblock Non-adiabatic crossings of energy levels.
\newblock {\em Proc. Roy. Soc. London}, 137:696--702, 1932.

\end{thebibliography}
\end{document}